\DeclarePairedDelimiter{\ceil}{\lceil}{\rceil}
  \theoremstyle{definition}
  \newtheorem{condition}{Condition}
\crefname{condition}{condition}{conditions}
\begin{document}

\title{Divide and conquer method for proving gaps of frustration free Hamiltonians}

\author[1]{Michael J. Kastoryano}
\affil[1]{NBIA, Niels Bohr Institute, University of Copenhagen, Blegdamsvej 17, 2100 Copenhagen, Denmark}
\author[2,1]{Angelo Lucia\thanks{angelo@math.ku.dk}}
\affil[2]{QMATH, Department of Mathematical Sciences, University of Copenhagen, Universitetsparken 5, 2100 Copenhagen, Denmark}

\maketitle

\begin{abstract}
  Providing system-size independent lower bounds on the spectral gap of local Hamiltonian is in general
  a hard problem. For the case of finite-range, frustration free Hamiltonians on a spin lattice of
  arbitrary dimension, we show that a property of the ground state space is sufficient to obtain such a bound.
  We furthermore show that such a condition is necessary and equivalent to a constant spectral gap.
  Thanks to this equivalence, we can prove that for gapless models
  in any dimension, the spectral gap on regions of diameter $n$ is at most
  $o\qty(\frac{\log(n)^{2+\epsilon}}{n})$ for any positive $\epsilon$.
\end{abstract}



\section{Introduction}

Many-body quantum systems are often described by local Hamiltonians on a lattice, in which every
site interacts only with few other sites around it, and the range of the interactions is given in
terms of the metric of the lattice.
One of the most important properties of these Hamiltonians is the so-called \emph{spectral gap}:
the difference between the two lowest energy levels of the operator. The low-temperature
behavior of the model (and in particular of its ground states) relies on whether the spectral gap
is lower bounded by a constant which is independent on the number of particles (a situation usually
referred to as \emph{gapped}), or on the contrary the spectral gap tends to zero as we take the
number of particle to infinity (the \emph{gapless}\footnote{We are using the terminology as it is
  frequently used in the quantum information community. In other contexts, one could only be
  interested in the thermodynamic limit, and the situation we have denoted as gapless does not
  necessarily imply that there is a continuous spectrum above the groundstate energy in such limit.}
case).

Quantum phase transitions are described by points in the phase diagram were the spectral gap
vanishes\cite{2012arXiv1203.4565S,Bachmann2011}, and therefore understanding the behavior of the
spectral gap is required in order to classify different phases of matter. A constant spectral gap
implies exponential decay of correlations in the groundstate\cite{Hastings_2006,Nachtergaele_2006},
and it is conjectured (and proven in 1D) that entanglement entropy will obey an area law
\cite{Hastings_2007}.  Moreover, the computational complexity of preparing the groundstate via an
adiabatic preparation scheme\cite{Farhi2000} is given by the inverse of the spectral gap, implying
that groundstates of gapped models can be prepared efficiently.  It is also believed that it is
possible to give synthetic descriptions of such groundstates in terms of Projected Entangled Paris
States (PEPS)\cite{Schuch2010}, and to prepare them with a quantum computer\cite{Schwarz2013}.

Because of the importance of the spectral gap, there is a large history of powerful results in
mathematical physics regarding whether some systems are gapped or not, such as the
Lieb-Schultz-Mattis theorem \cite{Lieb_1961} and its higher dimensional generalization
\cite{Nachtergaele_2007,Hastings_2004}, the so-called \emph{``martingale method''} for spin chains
\cite{Nachtergaele1996}, the local gap thresholds by Knabe \cite{Knabe1988} and by Gosset and
Mozgunov \cite{Gosset2016}. Cubitt, Perez-Garcia and Wolf have shown \cite{Undecidability} that the
general problem of determining, given a finite description of the local interactions, whether a 2D
local Hamiltonian is gapped or not is undecidable. Nonetheless, this result does not imply that it
is not possible to study the spectral gap of some specific models, and the problem can be decidable
if we restrict to specific sub-classes of interactions.

While these results have constituted tremendous progress, there is still a lack of practical tools for
studying the gap for large classes of lattice systems, especially in dimensions greater than one.  In
this paper we consider frustration-free, finite range local Hamiltonian on spin lattices, and we
present a technique for proving a lower bound on the spectral gap. Compared to the other methods for
bounding the spectral gap that are available in the literature, the one we propose uses a recursive
strategy that is more naturally targeted to spin models in dimension higher than 1, and which we
hope might allow to generalize some of the results that at the moment have only been proved in 1D.

The approach we present is based on a property of the groundstate space reminiscent of the
``martingale method''. A description of the groundstate space might not be available in all cases,
but it is easily obtained for tensor network models such as PEPS\cite{Schuch2010}. We are able to
prove that this condition is also necessary in gapped systems, obtaining an equivalence with the
spectral gap. More specifically, we will define two versions of the martingale condition, a strong
and a weak one, and we will show that the spectral gap implies the strong one. The strong martingale
condition implies the weak one, hence completing the loop of equivalences. This ``self-improving''
loop will allow us to give an upper bound on the rate at which the spectral gap vanishes in gapless
systems, as any rate slower than that allows us to prove a constant spectral gap.

In order to prove the equivalence between the strong martingale condition and the spectral gap, we will
use a tool known as the Detectability Lemma\cite{Aharonov2011,Aharonov2009}.
We will also show that if the Detectability Lemma operator contracts the energy by a constant
factor, then the system is gapped. This condition is reminiscent of the ``converse
Detectability Lemma''\cite{Anshu2016,Gao2015},  but we do not know whether these two conditions
are equivalent.

Proving gaps of Hermitian operators has a long history in the setting of (thermal)  stochastic evolution of classical spin systems. In this
setting, there are numerous tools for bounding the spectral gap of the stochastic generator
(which in turn allows to bound the mixing time of the process) both for classical
\cite{Martinelli1994a,Martinelli1994b,Martinelli1994c,Martinelli1999,Cesi2001,DaiPra2002} and for
quantum commuting Hamiltonians\cite{Kastoryano2016}.

In the classical setting, the theorems establish an intimate link between the mixing time of a
stochastic semigroup (the Glauber dynamics) and the correlation properties in the thermal state at a
specified temperature: for sufficiently regular lattices and boundary conditions, correlations
between two observables are exponentially decaying (as a function of the distance between their
supports) \textit{if and only if} the Glauber dynamics at the same temperature mixes rapidly (in a
time $O(\log(N))$, where $N$ is the  volume of the system). All of the proofs of the classical
results in some way or another rely on showing that exponential decay of correlations implies a
Log-Sobolev inequality of the semi-group, and in the other direction, that the log-Sobolev
inequality implies a spectral gap inequality, which in turn implies exponential decay of
correlation. We will take inspiration from a weaker form of the classical theorem that shows the
equivalence between spectral gap of the semigroup and exponential decay of correlation.

The paper is organized as follows. In \cref{sec:main-results}, we will describe the main assumption
on the groundstate space that implies the spectral gap, and then we will state the main results.  In
\cref{sec:dl} we will recall some useful tools, namely the detectability lemma and its converse. In
\cref{sec:cond-c-gap}, we will finally prove the main theorem, together with the local gap threshold.

\section{Main results}\label{sec:main-results}
\subsection{Setup and notation}

Let us start by fixing the notation and recalling some common terminology in quantum spin systems.
We will consider a $D$-dimensional lattice $\Gamma$ (the standard example being $\Gamma=\Z^D$, but
the same results will hold for any graph which can be isometrically embedded in $\R^D$). At each
site $x\in \inflat$ we associate a finite-dimensional Hilbert space $\hs_x$, and for simplicity we
will assume that they all have the same dimension $d$. For every finite subset
$\lat \subset \inflat$, the associated Hilbert space $\hs_\lat$ is given by
$\otimes_{x\in \lat} \hs_x$, and the corresponding algebra of observables is
$\alg_\lat = \bound(\hs_\lat)$. If $\lat \subset \lat^\prime$ we will identify $\alg_\lat$ as the
subalgebra $\alg_\lat \otimes \I_{\lat^\prime \setminus \lat} \subset \alg_{\lat^\prime}$. If $P$
is an orthogonal projector, we will denote by $P_\perp$ the complementary projection $1-P$.

A local Hamiltonian is  a map associating each finite $\lat \subset \inflat$ to a Hermitian
operator $H_\lat$, given by
\[ H_\lat = \sum_{X \subset \lat} h(X) ,\] where $h(X) \in \alg_X$ is Hermitian.
We will denote the orthogonal projector on the groundstate space of $H_\lat$ (i.e. the
eigenprojector corresponding to the smallest eigenvalue of $H_\lat$) as $P_\lat$.
We will make the following assumptions on the interactions $h(X)$:
\begin{description}
  \item[(Finite range)] there exist a positive $r$ such that $h(X)$ is zero whenever the diameter of
    $X$ is larger than $r$. The quantity $r$ will be denoted the \emph{range} of $h$;
  \item[(Frustration freeness)] for every $X$, $h(X) P_\lat = E_0(X)P_\lat$, where $E_0(X)$ is the lowest eigenvalue of $h(X)$.
\end{description}
Note that frustration freeness implies that $P_\lat P_{\lat^\prime} = P_{\lat^\prime} $ whenever
$\lat \subset \lat^\prime$. By applying a global energy shift, we can replace $h(X)$ with $h(X) -
E_0(X)$, and we will assume that $E_0(X)=0$ for every $X$, so that $H_\lat \ge 0$.

\begin{definition}[Spectral gap]
For every $\lat$, we will denote by $\lambda_\lat$ the difference between the two lowest distinct
eigenvalues of $H_\lat$ (which, since we have assumed that 0 is the lowest eigenvalue, is the same
as the smallest non-zero eigenvalue of $H_\lat$). This quantity will be called the \emph{spectral gap} of $H_\lat$, and it
can be expressed as follows:

\begin{equation}\label{eq:variational-sg}
  \lambda_\lat  = \inf_{\ket{\phi}} \frac{\expval{H_\lat}{\phi}}{\expval{P_\lat^\perp}{\phi}}.
\end{equation}
\end{definition}
We will interpret this as a ratio of two quadratic functionals on $\hs_\lat$:
\begin{align}
  \label{eq:variance}
  \vari_\lat(\phi) &= \braket{\phi} - \expval{P_\lat}{\phi} = \expval{P_\lat^\perp}{\phi};\\
  \label{eq:dirichelet}
  \diri_\lat(\phi) &= \expval{H_\lat}{\phi}.
\end{align}
We will use the symbol $\vari_\lat(\phi)$ since the functional can be thought as a type of variance:
it equals $\norm{\ket{\phi} - P_\lat \ket{\phi}}^2$, it is always positive and vanishes only on states in $P_\lat$.
We can then rewrite \cref{eq:variational-sg} as the following optimization problem: $\lambda_\lat$ is the largest constant such that $\lambda_\lat \vari_\lat(\phi) \le \diri_\lat(\phi)$.

In order to simply the proofs, we will also make the following extra assumption on the interactions
$h(X)$:
\begin{description}
\item[(Local projections)] Every $h(X)$ is an orthogonal projection.
\end{description}

\begin{remark}
  The assumption that every $h(X)$ is an orthogonal projection is not a fundamental restriction.
  Let us denote by $E_1(X)$ (resp. $E_{\max}(X)$) the second-smallest
  eigenvalue (resp. the largest eigenvalue) of $h(X)$, and remember that we have assumed that the
  lowest eigenvalue of each $h(X)$ is zero. If we then assume the two following conditions
  \begin{description}
  \item[(Local gap)] $e = \inf_X E_1(X) >0$;
  \item[(Local boundness)] $E=\sup_X E_{\max}(X) < \infty$;
  \end{description}
  then we can see that for every finite $\lat\subset\inflat$:
  \[ e \sum_{X\subset \lat} P_X^\perp \le H_\lat \le E \sum_{X \subset \lat} P_X^\perp, \]
  where we have denoted by $P_X$ the projector on the groundstate of $h(X)$. Therefore $H_\lat$ will
  have a non-vanishing spectral gap if and only the spectral gap of the Hamiltonian composed of
  projectors $\sum_X P_X^\perp$ is not vanishing.
  This shows that, as far as we are interested in the behavior of the spectral gap,
  requiring local gap and local boundness is equivalent to requiring that the interactions $h(X)$
  are projectors.
\end{remark}

Given a local Hamiltonian $H$ which is finite range and frustration free
it is easy to see that interactions can be partitioned into  $g$ groups,
referred to as ``layers'', in such a way that every layer consists of non-overlapping (and therefore
commuting) terms. For a fixed $\lat \subset \inflat$, let us index the layers from 1 to $g$, and
denote  $L_i$ the orthogonal projector on the common groundstate space of the interactions
belonging to group $i$. Since they are commuting, $L_i$ can also be seen as the product of the
groundstate space projectors of each interaction term.  For any given ordering of $\{1,\dots, g\}$, we can then define the product
$L = \prod_{i=1}^g L_i$ (different orders of the product will in general give rise to different
operators). Any operator constructed in this fashion is called an \emph{approximate ground state
  projector}.

\begin{figure}[h]
\centering
  \includegraphics[scale=0.45]{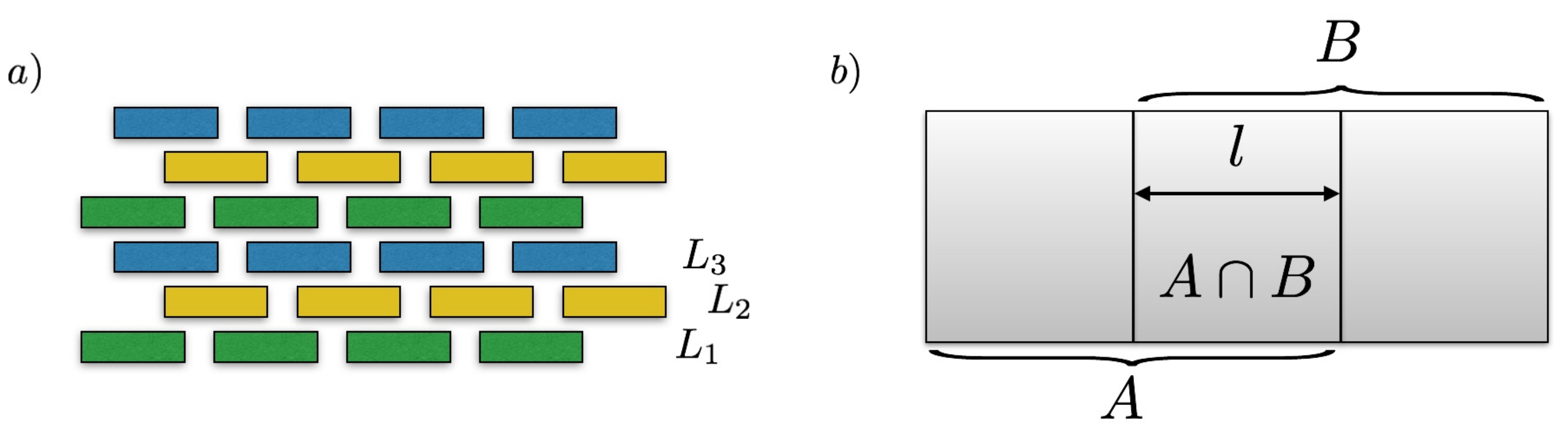}
    \caption{a) Depiction of two $g=3$ layers covering of the one dimensional lattice by local orthogonal projectors $L_g$. b) Decomposition of the lattice region $A\cup B$ in the definition of the martingale condition.   }
    \label{fig:1}
\end{figure}

\subsection{Statement of the results}\label{sec:results}

We will now state the main assumptions needed in the proof of the
spectral gap theorem.  In order to do so, we will need to introduce
some notation for the overlap between groundstate spaces of different
regions.

\begin{definition}\label{def:delta}
Let $A, B$ be finite subsets of $\inflat$. Let $P_{A\cup B}, P_A,$ and $P_B$ be
respectively the orthogonal projectors on the ground state space of $H_{A\cup B}, H_A$
and $H_B$. Then we define
\begin{equation}
  \delta(A,B) = \norm{(P_A - P_{A\cup B}) (P_B - P_{A\cup B})}.
\end{equation}
\end{definition}
\begin{remark}
Because of  frustration freedom, we have $P_A P_{A\cup B} = P_{A\cup B} P_A = P_{A\cup B}$
and the same holds for $P_B$. In turn this imply that
\[ (P_A - P_{A\cup B})(P_B - P_{A\cup B}) = P_A P_B - P_{A\cup B} ,\] so that $\delta(A,B)$ can be
both seen as a measure of the overlap between $(P_A-P_{A\cup B})$ and $(P_B - P_{A\cup B})$ (the
cosine of the first principal angle between the two subspaces), as well as a measure of how much
$P_{A\cup B}$ can be approximated by $P_AP_B$.
\end{remark}

The intuition behind Def. \ref{def:delta} is that in a gapped system, if $l$ is
the diameter of the largest ball contained in $A\cap B$, then $\delta(A,B)$ should be a fast
decaying function of $l$. In this setting we will refer to the ``size'' of the overlap of $A$ and
$B$ as $l$ (see Fig 1b). One might also hope that $\delta(A,B)$ only depends on $l$ and not on the size
of $A \Delta B = (A \cup B) \setminus (A\cap B)$. This is captured by the following assumption:

\begin{condition}\label{cond:a}
  There exists a positive function $\delta(l)$ with exponential decay in $l$, i.e.
  $\delta(l) \le c\alpha^l$ for some $0<\alpha<1$ and $c>0$, such that
  for every connected $A$ and $B$, such that $A\cap B$  has size $l$, the following bound holds:
  \begin{equation}
    \label{eq:martingale-condition}
    \delta(A,B) \le \delta(l).
  \end{equation}
\end{condition}

We will now present some weaker versions of \cref{cond:a}. As we will show later, they will all turn
out to be equivalent, but it might be hard to verify the stronger versions in some concrete
examples. The first relaxation we have is to require a slower decay of the function $\delta(l)$.

\begin{condition}\label{cond:b}
  There exists a positive function $\delta(l)$ with polynomial decay in $l$, i.e.
  $\delta(l) \le cl^{-\alpha}$ for some $\alpha>0$ and $c>0$, such that
  for every connected $A$ and $B$, such that $A\cap B$  has size $l$,
  \cref{eq:martingale-condition} holds.
\end{condition}

Clearly, \cref{cond:a} implies \cref{cond:b}. As formulated, conditions \ref{cond:a} and
\ref{cond:b} and B require \cref{eq:martingale-condition} to be satisfied homogenously for all
regions $A$ and $B$ of arbitrary size. However, in order to prove a bulk spectral gap, such a strong
homogeneity assumption can be relaxed. We can allow for the size of $A\cap B$, of $A$ and of $B$ to
be taken into account; intuitively, we would like to have less stringent requirement if $A\cap B$ is
very small compared to $A$ and $B$.  In particular, we will define classes $\mcl F_k$ of sets, which
have the property that they can be decomposed as overlapping unions of sets in $\mcl F_{k-1}$, with
a sufficiently large overlap. Then we will only require \cref{eq:martingale-condition} to hold for
this specific decomposition, and moreover we will allow the bound $\delta(l)$ to depend on $k$.

The construction of the sets $\mcl F_k$ we present is a generalization of the one originally proposed by Cesi
\cite{Cesi2001} and used in the context of open quantum systems by one of the authors \cite{Kastoryano2016}.
\begin{definition}\label{def:fk}
For each $k \in \N$, let $l_k = (3/2)^{k/D}$ and denote
\[ R(k) = [0,l_{k+1}] \times \dots \times [0,l_{k+D}] \subset \R^D. \]
Let $\mcl F_k$ be the collection of $\Lambda \subset \inflat$ which are contained in $R(k)$ up to translations and permutation of the coordinates.
\end{definition}

\begin{figure}[h]
\centering
  \includegraphics[scale=0.45]{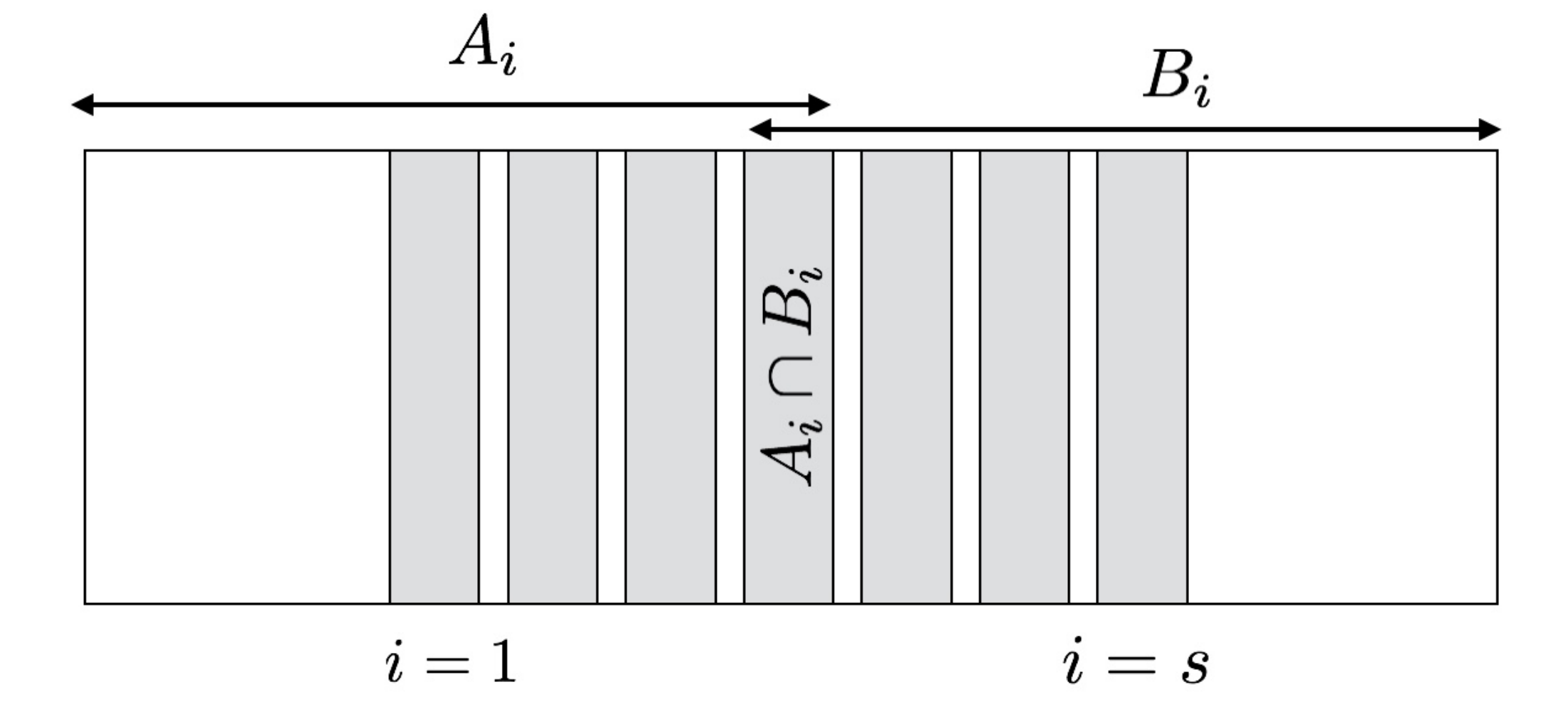}
    \caption{Depiction of the decomposition of the region $\Lambda=A_i\cup B_i$, with $i=1,...,s$, where the intersections of sets $A_i\cap B_i$ are all non-overlapping.}
    \label{fig:2}
\end{figure}

We now show that  sets in $\mcl F_k$ can be decomposed ``nicely'' in
terms of sets in $\mcl F_{k-1}$.
\begin{proposition}\label{prop:geometrical-construction}
  For each $\Lambda \in \mcl F_k \setminus \mcl F_{k-1}$ and each positive integer $s \le \frac{1}{8} l_k$, there exist
  $s$ distinct pairs of non-empty sets $(A_i,B_i)_{i=1}^s$ such that
\begin{enumerate}
  \item $\Lambda = A_i \cup B_i$ and $A_i, B_i \in \mcl F_{k-1} \quad \forall i=1,\dots,s$;
  \item $\dist(\Lambda \setminus A_i, \Lambda \setminus B_i) \ge \frac{l_k}{8s} -2$;
  \item $A_i \cap B_i \cap A_j \cap B_j = \emptyset \quad \forall i \neq j$.
  \end{enumerate}
  We will call a set of $s$ distinct pairs $(A_i, B_i)_{i=1}^s$ of non-empty sets satisfying the above properties an \emph{$s$-decomposition of $\Lambda$}.
\end{proposition}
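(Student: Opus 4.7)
My plan is to produce the $s$ pairs by slicing $\Lambda$ with $s$ parallel hyperplanes orthogonal to the longest side of its bounding box. After permuting and translating coordinates I may assume that the bounding box of $\Lambda$ has sides $t_1\le t_2\le \cdots\le t_D$ with $t_j\le l_{k+j}$ for every $j$, and write $L:=t_D\le l_{k+D}=(3/2)\,l_k$. Since $\Lambda\notin\mcl F_{k-1}$, not every side can be at most $l_k$ (otherwise $\Lambda$ would fit in $R(k-1)$), so $L>l_k$; in particular the range of valid cut positions $[L-l_k,\,l_k]$ has length $2l_k-L\ge l_k/2$.

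For integers $b_i\le a_i$ inside this range, set $A_i=\Lambda\cap\{x_D\le a_i\}$ and $B_i=\Lambda\cap\{x_D\ge b_i\}$, so that $A_i\cup B_i=\Lambda$ and $A_i\cap B_i$ is a slab orthogonal to direction $D$. The crux is the claim that $a_i\le l_k$ already forces $A_i\in\mcl F_{k-1}$, and symmetrically for $B_i$. I would prove this by bookkeeping on sorted side lengths: replacing $t_D$ by some $\tilde t\le l_k$ and inserting $\tilde t$ at position $j'$ of the sorted list $(t_1,\dots,t_{D-1})$, one checks that each original entry before $j'$ is at most $\tilde t\le l_k$, the new entry at position $j'$ is at most $l_k\le l_{(k-1)+j'}$, and each original entry $t_{i-1}$ landing at position $i>j'$ satisfies $t_{i-1}\le l_{k+i-1}=l_{(k-1)+i}$; these are exactly the slot inequalities defining membership in $\mcl F_{k-1}$.

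For the $s$-decomposition I then pick integer intervals $[b_i,a_i]$ of common width $w=\floor{(2l_k-L)/s}-1$ packed disjointly inside $[L-l_k,\,l_k]$, which is possible because the total budget is $2l_k-L$ and the hypothesis $s\le l_k/8$ guarantees $w\ge 1$. Property~(1) holds by construction; property~(3) follows from the disjointness of the intervals $[b_i,a_i]$ together with the fact that $A_i\cap B_i$ is the slab over $[b_i,a_i]$; and property~(2) reduces to
\[
  \dist(\Lambda\setminus A_i,\,\Lambda\setminus B_i)\;\ge\; a_i-b_i\;\ge\;\frac{l_k}{2s}-2\;\ge\;\frac{l_k}{8s}-2.
\]

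The only genuinely delicate step is the sorted-sides computation in the second paragraph: one must verify that cutting only the single largest side to below $l_k$ is enough to move $\Lambda$ from the slot configuration of $R(k)$ to that of $R(k-1)$ no matter where the new reduced side lands in the sorted order. This works cleanly because the slots of $R(k-1)$ are obtained from those of $R(k)$ by prepending $l_k$ and dropping $l_{k+D}$, a one-step index shift that is absorbed exactly by inserting a single entry of size at most $l_k$ into the sorted list.
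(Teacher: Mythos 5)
Your proof is correct and follows the same slab-decomposition strategy as the paper: slicing $\Lambda$ with $s$ parallel hyperplanes orthogonal to a single direction and verifying the three properties for the resulting pairs $(A_i,B_i)$. The one substantive difference is in the $\mcl F_{k-1}$-membership check. You cut orthogonally to the longest side $t_D$ of $\Lambda$'s \emph{actual} bounding box, which forces the ``sorted insertion'' bookkeeping in your second paragraph (inserting the reduced side $\tilde t\le l_k$ into the sorted list $(t_1,\dots,t_{D-1})$ and verifying the slot inequalities position by position). The paper sidesteps this entirely: after fixing an embedding $\Lambda\subset R(k)$, it cuts along the coordinate carrying the longest \emph{slot} $l_{k+D}$ (whether or not this is the direction in which $\Lambda$ itself is longest), so the remaining side lengths, bounded by $l_{k+1},\dots,l_{k+D-1},l_k$, visibly match the slot pattern of $R(k-1)$ after one cyclic permutation. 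The trade-off is that the paper's cut positions, clustered around $l_{k+D}/2$, can fall outside $\Lambda$'s bounding box, so the paper needs a separate contradiction argument that $A_i,B_i\neq\emptyset$ (an empty $A_i$ would place $\Lambda$ inside a set of $\mcl F_{k-1}$). Your version, which confines all cuts to $[L-l_k,\,l_k]\subset(0,L)$, gets non-emptiness for free since $\Lambda$ meets both faces $\{x_D=0\}$ and $\{x_D=L\}$ of its own bounding box. Both routes are sound; the paper's choice lightens the membership step while yours lightens the non-emptiness step, and the remaining estimates (budget $2l_k-L\ge l_k/2$, width per slab, distance bound, disjointness of the $A_i\cap B_i$) track the paper's almost verbatim.
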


The proof of this proposition -- a minor variation over the one presented by Cesi\cite{Cesi2001} -- is contained in \cref{appendix:geometrical-construction}.
With this definition of $\mcl F_k$ at hand, we can now present the weakest version of \cref{cond:a}.
\begin{condition}\label{cond:c}
  There exists an increasing sequence of positive integers $s_k$, with $\sum_{k}\frac{1}{s_k} <
  \infty$, such that
  \begin{equation}\label{eq:cond-c}
    \sum_{k=1}^\infty \delta_k := \sum_{k=1}^\infty \sup_{\lat \in \mcl F_k \setminus
      F_{k-1}}\sup_{A_i, B_i} \delta(A_i,B_i) < \infty,
  \end{equation}
  where the second supremum is taken over all $s_k$-decompositions $\lat = A_i \cup B_i$ given by \cref{prop:geometrical-construction}.
\end{condition}

It is not immediately clear from the definition that \cref{cond:c} is implied by \cref{cond:b}, so
we show this in the next proposition.
\begin{proposition}
  \Cref{cond:b} implies \cref{cond:c} with any $s_k$ such that $\sum_{k}\frac{s_k}{l_k} < \infty$.
\end{proposition}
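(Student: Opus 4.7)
The plan is to apply \cref{cond:b} directly to each pair $(A_i, B_i)$ in the $s_k$-decomposition produced by \cref{prop:geometrical-construction} and then to sum the resulting bounds over $k$.

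First I would turn property 2 of \cref{prop:geometrical-construction} into a lower bound on the geometric size of the overlap $A_i \cap B_i$. Writing
\[
\Lambda \setminus (A_i \cap B_i) = (\Lambda \setminus A_i) \cup (\Lambda \setminus B_i)
\]
and noting that the two pieces on the right are separated by a distance of at least $l_k/(8 s_k) - 2$, the intersection $A_i \cap B_i$ must contain a slab of this width, and therefore a ball of diameter at least $\ell_k := \lfloor l_k/(8 s_k) - 2 \rfloor$. In the terminology of \cref{cond:a,cond:b}, this means that the ``size'' of the overlap $A_i \cap B_i$ is at least $\ell_k$.

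Second, I would apply \cref{cond:b} to each such pair: the hypothesis gives $\delta(A_i, B_i) \le c\, \ell_k^{-\alpha}$. Taking the supremum over $i$, over all $s_k$-decompositions of $\Lambda$, and over all $\Lambda \in \mcl F_k \setminus \mcl F_{k-1}$ yields $\delta_k \le c\, \ell_k^{-\alpha}$.

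Third, I would verify summability. The assumption $\sum_k s_k / l_k < \infty$ forces $s_k / l_k \to 0$, so for all sufficiently large $k$ one has $\ell_k \ge l_k/(16 s_k)$ and hence $\delta_k \le C (s_k/l_k)^\alpha$. Because $l_k = (3/2)^{k/D}$ grows exponentially, any $s_k$ compatible with \cref{cond:c}'s requirement $\sum_k 1/s_k < \infty$ (and with $s_k \le l_k/8$, needed to invoke \cref{prop:geometrical-construction}) makes $s_k/l_k$, and hence $(s_k/l_k)^\alpha$, decay geometrically, giving $\sum_k \delta_k < \infty$ as required. The main substantive step is the first one: correctly translating the distance separation of property 2 of \cref{prop:geometrical-construction} into a bound on the largest-ball diameter inside $A_i \cap B_i$ as used by \cref{cond:a,cond:b}. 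A smaller technical point is to confirm that the sets $A_i, B_i$ produced by the construction are connected, as required by \cref{cond:b}; this should follow from the explicit hyperplane-cut decomposition described in \cref{appendix:geometrical-construction}.
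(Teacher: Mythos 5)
Your first two steps track the paper's proof: the paper also uses that the overlap $A_i\cap B_i$ has size at least $\frac{l_k}{8s_k}-2$ (this is exactly property 2 of \cref{prop:geometrical-construction}) and then applies the polynomial decay to get $\delta_k \le \delta\bigl(\frac{l_k}{8s_k}-2\bigr)$. Making the derivation of the overlap size explicit is fine, and the caveat you raise about connectedness of the $A_i,B_i$ is a reasonable technical check that the paper glosses over.

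The problem is in your third step. You claim that any $s_k$ compatible with $\sum_k 1/s_k < \infty$ and $s_k \le l_k/8$ makes $s_k/l_k$ (and hence $(s_k/l_k)^\alpha$) decay geometrically because $l_k$ grows exponentially. This is false. Take $s_k = \max\{1,\lfloor l_k/k^2\rfloor\}$: since $l_k=(3/2)^{k/D}$ this sequence is eventually increasing, it satisfies $\sum_k 1/s_k<\infty$, $s_k\le l_k/8$ for $k\ge 3$, and $\sum_k s_k/l_k<\infty$, yet $s_k/l_k \sim 1/k^2$ decays only polynomially, and $(s_k/l_k)^\alpha\sim 1/k^{2\alpha}$ is not even summable when $\alpha\le 1/2$. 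So the geometric-decay claim is simply unavailable, and your summability deduction does not go through as written. The correct reading of your step two is only the bound $\delta_k \le C\,(s_k/l_k)^\alpha$; when $\alpha\ge 1$, since $s_k/l_k<1$ eventually, $(s_k/l_k)^\alpha \le s_k/l_k$ and the hypothesis $\sum_k s_k/l_k<\infty$ gives $\sum_k\delta_k<\infty$ directly, which is the substance of the paper's one-line conclusion. (For $\alpha<1$ even the paper's statement is terse; in any case the route is to keep the explicit power $\alpha$, not to invoke a geometric decay of $s_k/l_k$ that does not exist.)
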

\begin{proof}
  Let $\delta(l)$ be as in \cref{cond:a}. Since for every
  $\lat_k\in \mcl F_k \setminus \mcl F_{k-1}$ and for every $s_k$-decomposition
  $\lat_k=A_i \cup B_i$ of $\lat_k$, the overlap $A_i \cap B_i$ has size at least
  $\frac{l_k} {8s_k}-2$, then

  \[ \delta_k \le \delta\qty(\frac{l_k}{8s_k}  - 2) .\]
  Since $\delta(l)$ decays as $l^{-\alpha}$ for some positive $\alpha$, $\delta_k$ is
  summable if $\sum s_k/{l_k}$ is summable.
\end{proof}

\begin{remark}\label{remark:cond-a-cond-c}
  If we consider \cref{cond:c} with $s_k$ growing faster than $\frac{l_k}{k}$, then the previous
  proposition does not apply -- note that in any case $s_k$ has to be smaller than $\frac{1}{8} l_k$
  for the construction of \cref{prop:geometrical-construction} to be possible. In practice we do not
  need to consider such situations. In Cesi\cite{Cesi2001}, $s_k$ was chosen to be of order
  $l_k^{1/3}$. As we will see later, we will be interested in choosing $s_k$ with slower rates than
  that (while still having $\sum 1/s_k$ finite), so the condition $s_k = \order{\frac{l_k}{k}}$ will
  not be restrictive for our purposes. So from now on, we will only consider \cref{cond:c} in the
  case where  $s_k = \order{\frac{l_k}{k}}$.
\end{remark}

The main result of the paper is to show that \cref{cond:c} is sufficient to prove a
spectral gap. In turn, this will imply \cref{cond:a}, which as we have already seen in
\cref{remark:cond-a-cond-c} implies \cref{cond:c}, showing that all  three conditions are equivalent.

\begin{theorem}
  \label{thm:main-intro}
Let $H$ be a finite range, frustration free, local Hamiltonian, and let $\mcl F_k$ be as in \cref{prop:geometrical-construction}. Then the following are equivalent
\begin{enumerate}
  \item $\inf_k \inf_{\lat \in \mcl F_k} \lambda_\lat \ge \lambda > 0$ (or in other words, $H$ is gapped);
  \item $H$ satisfies \cref{cond:a} with $\delta(\ell) = \frac{1}{(1+\lambda/g^2)^{l/2}}$ for some
    constant $g$;
  \item $H$ satisfies \cref{cond:b};
  \item $H$ satisfied \cref{cond:c} with $s_k$ such that $ \sum_{k}\frac{s_k}{l_k} < \infty$.
\end{enumerate}
\end{theorem}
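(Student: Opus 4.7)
I would prove the theorem by establishing the cycle of implications $(1)\Rightarrow (2) \Rightarrow (3) \Rightarrow (4) \Rightarrow (1)$. The two easiest steps are essentially immediate: $(2) \Rightarrow (3)$ since any exponential decay function is in particular polynomial, and $(3) \Rightarrow (4)$ is exactly the content of the previous proposition, as the overlap produced by \cref{prop:geometrical-construction} has size $\gtrsim l_k/s_k$, so a polynomial decay bound $\delta(l)\le c l^{-\alpha}$ gives $\delta_k = \mathcal{O}((s_k/l_k)^\alpha)$, which is summable whenever $\sum s_k/l_k$ is.

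For $(1) \Rightarrow (2)$, the strategy is to express $\delta(A,B)$ through the Detectability Lemma. By frustration-freeness, $\delta(A,B) = \|(P_A P_B)\restriction_{P_{A\cup B}^\perp}\|$. View $P_AP_B$ as an approximate ground-state projector for $H_{A\cup B}$: partition the $g$ standard commuting layers of $H_{A\cup B}$ by whether their terms are supported in $A$ or in $B$, so $P_A$ and $P_B$ are each products of layer projectors, and a single application of DL yields $\|P_AP_B\restriction_{P_{A\cup B}^\perp}\|^2 \le (1+\lambda/g^2)^{-1}$. To get the dependence on the overlap size $l$, one iterates this inside the overlap by shifting the pair $(A,B)$ through nested sub-pairs, each providing a factor $(1+\lambda/g^2)^{-1/2}$; the number of independent shifts fitting in the overlap is proportional to $l$, yielding the announced bound $\delta(l)\le (1+\lambda/g^2)^{-l/2}$.

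The main work is $(4) \Rightarrow (1)$, which I would prove by induction on $k$ on the quantity $\lambda_k = \inf_{\Lambda \in \mcl F_k} \lambda_\Lambda$. The base case $\lambda_{k_0}>0$ holds trivially for some small $k_0$ since the sets in $\mcl F_{k_0}$ have bounded size. For the induction step, fix $\Lambda \in \mcl F_k\setminus \mcl F_{k-1}$ and choose an $s_k$-decomposition $(A_i,B_i)$ from \cref{prop:geometrical-construction}. Once $k$ is large enough that $l_k/(8s_k)-2$ exceeds the range $r$, no interaction straddles $\Lambda\setminus A_i$ and $\Lambda\setminus B_i$, so $H_{A_i}+H_{B_i}=H_\Lambda + H_{A_i\cap B_i}$, hence $\diri_{A_i}+\diri_{B_i}=\diri_\Lambda+\diri_{A_i\cap B_i}$. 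Summing over $i$ and using property (3) of the decomposition (the sets $A_i\cap B_i$ are pairwise disjoint, so $\sum_i \diri_{A_i\cap B_i}\le \diri_\Lambda$) gives
\[
\sum_{i=1}^{s_k} \bigl(\diri_{A_i}(\phi) + \diri_{B_i}(\phi)\bigr) \le (s_k+1)\,\diri_\Lambda(\phi).
\]
On the other hand, since $A_i,B_i\in \mcl F_{k-1}$ the inductive hypothesis gives $\diri_{A_i}\ge \lambda_{k-1}\vari_{A_i}$ and likewise for $B_i$; combining with the key projector identity $\vari_{A_i}+\vari_{B_i}=2\vari_\Lambda - \langle \phi,(Q_{A_i}+Q_{B_i})\phi\rangle$ with $Q_X=P_X-P_\Lambda$, and the sharp bound $\|Q_{A_i}+Q_{B_i}\|\le 1+\|Q_{A_i}Q_{B_i}\|=1+\delta(A_i,B_i)$, one obtains $\vari_{A_i}+\vari_{B_i}\ge (1-\delta_k)\vari_\Lambda$. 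Putting the two sides together:
\[
\lambda_k \ge \frac{s_k}{s_k+1}\,(1-\delta_k)\,\lambda_{k-1}.
\]
Iterating and taking the infinite product, $\inf_k \lambda_k \ge \lambda_{k_0}\prod_{j>k_0}\tfrac{s_j}{s_j+1}(1-\delta_j)$, which converges to a strictly positive value precisely because \cref{cond:c} ensures $\sum_j 1/s_j<\infty$ and $\sum_j \delta_j<\infty$.

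The hardest step will be $(1)\Rightarrow (2)$: a single DL application gives a constant contraction in $\lambda/g^2$ but produces no $l$-dependence, and promoting this to exponential decay in the overlap requires an iterated geometric construction inside $A\cap B$ whose bookkeeping must be done carefully to preserve the stated $(1+\lambda/g^2)^{-l/2}$ bound. The step $(4)\Rightarrow (1)$ is conceptually straightforward once the projector inequality $\|Q_A+Q_B\|=1+\|Q_AQ_B\|$ is in hand, but relies delicately on the geometric properties of the $s_k$-decomposition so that both the averaging factor $s_k/(s_k+1)$ and the contraction factor $(1-\delta_k)$ contribute summable losses.
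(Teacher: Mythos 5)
Your overall structure --- the four-step cycle $(1)\Rightarrow(2)\Rightarrow(3)\Rightarrow(4)\Rightarrow(1)$ --- is exactly the paper's, and the easy implications $(2)\Rightarrow(3)$ and $(3)\Rightarrow(4)$ are handled correctly (the latter is precisely the proposition following \cref{cond:c}).

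Your treatment of $(4)\Rightarrow(1)$ is essentially the recursion of \cref{thm:cond-c-gap}, but your quasi-factorization step is in fact slightly sharper than the paper's. \Cref{lemma:quasi-factorization}, which uses an anti-commutator estimate, gives $\vari_A + \vari_B \ge \bigl(1-2\delta(A,B)\bigr)\vari_{A\cup B}$, whereas your route via the identity $\vari_A+\vari_B = 2\vari_{A\cup B} - \expval{Q_A+Q_B}{\phi}$ with $Q_X = P_X - P_{A\cup B}$, combined with the exact two-projection formula $\norm{Q_A+Q_B} = 1 + \norm{Q_A Q_B}$, yields the cleaner constant $(1-\delta)$. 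One small point you should make explicit: $Q_A$ and $Q_B$ are sub-projections of $P_{A\cup B}^\perp$, so $\expval{Q_A + Q_B}{\phi} \le (1+\delta)\,\vari_{A\cup B}(\phi)$ and not merely $(1+\delta)\norm{\phi}^2$. With that, the inductive step $\lambda_k \ge \tfrac{s_k}{s_k+1}(1-\delta_k)\lambda_{k-1}$ and the convergence of the infinite product go through exactly as you claim.

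The genuine gap is in $(1)\Rightarrow(2)$. You propose to ``view $P_A P_B$ as an approximate ground-state projector for $H_{A\cup B}$'' by partitioning the layers of $L$ into an $A$ piece and a $B$ piece, asserting that ``$P_A$ and $P_B$ are each products of layer projectors.'' That is false: $P_A$ is the exact orthogonal projector onto $\bigcap_{X\subset A}\ker h(X)$, which is a limit $\lim_n L_A^n$ of the layered AGSP $L_A=\prod_i L_i^A$, not a finite product of layers. The Detectability Lemma applies to the finite product $L_A$, not to $P_A$, so a single DL application cannot be invoked to bound $\norm{P_AP_B\,P_{A\cup B}^\perp}$. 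Your subsequent iteration --- ``shift $(A,B)$ through nested sub-pairs, each giving a factor $(1+\lambda/g^2)^{-1/2}$'' --- is also not a well-posed construction: you would need the gains to compose multiplicatively while keeping supports separated by the overlap, and it is not clear how to arrange this for intermediate pairs. The paper's \cref{thm:gap-m2} takes a different route that avoids both problems: it applies DL to the $q$-th power $L^q$ of the actual layered AGSP of $H_{A\cup B}$, shows in \cref{lemma:gap-m1} via a light-cone redistribution of the $gq$ layers that $L^q = M_A M_B$ with $M_A$ supported in $A$ and $M_B$ in $B$ whenever $q\le l$, and then uses the identity $P_A P_B - M_A M_B = P_A(P_B - M_B) + (P_A - M_A)M_B$ together with the observation that $P_A M_B = P_A L_B^q$ to bound each piece by $\epsilon_\Lambda^q = (1+\lambda_\Lambda/g^2)^{-q/2}$. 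Setting $q=l$ gives the exponential decay. You will need an argument of this kind; treating $P_A$ and $P_B$ themselves as products of layers does not work.
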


By proving the equivalence of these conditions, we are also able to show that in any gapless model, the spectral gap cannot close too slowly, since a slow enough (but still infinitesimal) gap will imply \cref{cond:c} and therefore a constant gap. The threshold is expressed in the following corollary.
\begin{corollary}\label{cor:threshold}
  If $H$ is gapless, then for any $\lat \subset \inflat$ of diameter $n$ it holds that
\[
    \lambda_\lat = o\qty(\frac{\log(n)^{2+\epsilon}}{n}),
  \]
  for every $\epsilon>0$.
\end{corollary}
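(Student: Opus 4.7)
The plan is to prove the corollary by contradiction, leveraging the equivalence (4) $\Leftrightarrow$ (1) of \cref{thm:main-intro}. Assume $H$ is gapless, and suppose for contradiction that for some $\epsilon > 0$ there exist $c > 0$ and $n_0$ such that every $\lat \subset \inflat$ of diameter $n \ge n_0$ satisfies $\lambda_\lat \ge c \log(n)^{2+\epsilon}/n$. I plan to use this bound to verify \cref{cond:c}, so that by \cref{thm:main-intro} the Hamiltonian $H$ must be gapped, contradicting gaplessness.

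Let $\lat \in \mcl F_k \setminus \mcl F_{k-1}$. Its diameter is of order $l_k = (3/2)^{k/D}$, so for $k$ large the hypothesis gives $\lambda_\lat \ge c_1 (\log l_k)^{2+\epsilon}/l_k \ge c_2 k^{2+\epsilon}/l_k$ for absolute constants $c_1,c_2>0$. Given an $s_k$-decomposition $\lat = A_i \cup B_i$ from \cref{prop:geometrical-construction}, the intersection $A_i \cap B_i$ has linear size $\ell \ge l_k/(8 s_k) - 2$. The detectability-lemma estimate that underlies the implication (1) $\Rightarrow$ (2) of \cref{thm:main-intro}, applied locally with $A \cup B = \lat$ and gap $\lambda_\lat$, then yields
\[ \delta(A_i, B_i) \le \left(1 + \lambda_\lat/g^2\right)^{-\ell/2} \le \exp\!\left(-\frac{\ell\, \lambda_\lat}{4 g^2}\right) \le \exp\!\left(-\frac{c_3\, k^{2+\epsilon}}{s_k}\right), \]
uniformly over $\lat \in \mcl F_k$ and over all its $s_k$-decompositions. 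Choosing the increasing integer sequence $s_k = \lceil k^{1+\epsilon/2} \rceil$, the ratio $k^{2+\epsilon}/s_k$ grows like $k^{1+\epsilon/2}$, so $\delta_k$ decays super-polynomially and $\sum_k \delta_k$ converges; meanwhile $\sum_k 1/s_k = \sum_k k^{-1-\epsilon/2}$ converges and $s_k = O(l_k/k)$ since $l_k$ grows exponentially in $k$, so the hypothesis of \cref{remark:cond-a-cond-c} is met. Hence \cref{cond:c} holds and \cref{thm:main-intro} delivers the desired contradiction.

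The main technical obstacle I anticipate is the localization of the detectability-lemma bound used above: the statement (1) $\Rightarrow$ (2) in \cref{thm:main-intro} is written with the global gap of $H$, whereas the argument actually needs the estimate $\delta(A,B) \le (1 + \lambda_{A \cup B}/g^2)^{-\ell/2}$ depending only on the gap of the union. This should follow by inspecting the proof of the implication, since the relevant layer product acts only inside $A \cup B$, but it would be cleanest to spell out this localized form as a separate lemma. A secondary, formal point is that the negation of ``$\lambda_\lat = o(\log(n)^{2+\epsilon}/n)$ for every positive $\epsilon$'' must be read uniformly over $\lat$ of diameter $n$ (for some fixed $\epsilon$ and all $n \ge n_0$), which is the natural reading of the corollary's universal quantifier over $\lat$ and is what the above computation requires.
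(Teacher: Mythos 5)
Your proof is correct and follows essentially the same route as the paper. The paper factors the argument through \cref{cor:knabe-like}, which packages precisely the detectability-lemma estimate and summability check that you carry out inline, and then applies it with $s_k = k^{1+\epsilon}$; your choice $s_k = \lceil k^{1+\epsilon/2}\rceil$ works equally well and for the same reason. The ``main technical obstacle'' you flagged is already resolved in the paper: the localized bound $\delta(A,B) \le (1+\lambda_{A\cup B}/g^2)^{-\ell/2}$ is exactly the statement of \cref{thm:gap-m2}, which is formulated in terms of the gap $\lambda_\Lambda$ of $H_\Lambda$ with $\Lambda = A\cup B$ (not the global gap), and is what the paper invokes inside the proof of \cref{cor:knabe-like}. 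One small point worth tightening: the step $(1+\lambda_\lat/g^2)^{-\ell/2}\le\exp(-\ell\lambda_\lat/(4g^2))$ uses $\log(1+x)\ge x/2$, which only holds for $x$ bounded (roughly $x\lesssim 2.5$); this is harmless here because if $\liminf_k\lambda_k>0$ the system is already gapped and there is nothing to prove, so one may restrict to $k$ large enough that $\lambda_k/g^2$ is small, but the reduction should be stated.
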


We also provide an independent condition for lower bounding the spectral gap.
Consider again the construction of the detectability lemma, where $L=L_1\cdots L_g$ is an approximate ground state projector.
\begin{theorem}\label{THM:M2-GAP}
If there exist a constant $0<\gamma<1$ such
\be  \diri(L\phi) \leq \gamma \diri(\phi),\ee
then the spectral gap of $H$ is bounded below by $\lambda \ge \frac{1-\gamma}{4}$.
\end{theorem}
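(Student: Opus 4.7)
The plan is to iterate the hypothesis to obtain geometric decay of $\diri(L^n\phi)$, and translate this into a Poincar\'e inequality $\vari(\phi) \le \tfrac{4}{1-\gamma}\diri(\phi)$. By the variational characterization~\eqref{eq:variational-sg}, this is exactly the statement $\lambda_\lat \ge (1-\gamma)/4$.

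Two preliminary facts about $L$ will be used throughout. First, since $h(X)P_\lat = 0$ for every interaction term, each layer projector $L_i$ fixes $P_\lat$ pointwise, so $LP_\lat = P_\lat L = P_\lat$. Second, $L$ is a contraction ($\|L\|\le 1$) as a product of orthogonal projections. Replacing $\phi$ by $\phi - P_\lat \phi$ (which changes neither $\vari(\phi)$ nor $\diri(\phi)$) we may assume $\phi\in P_\lat^\perp$; then $\psi_n := L^n\phi$ stays in $P_\lat^\perp$ for every $n$, and iterating the hypothesis gives $\diri(\psi_n)\le \gamma^n \diri(\phi)$.

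The heart of the argument is the operator inequality
\[ I - L^*L \;\le\; 4\, H, \]
which I would establish via the telescoping identity $I - L^*L = \sum_{k=1}^g N_k^*\, L_k^\perp\, N_k$ (with $N_k := L_{k+1}\cdots L_g$) together with the layer-wise domination $L_k^\perp \le \sum_{X\in\text{layer }k} h(X)$, which follows from the assumption that the $h(X)$ are projectors and the mutually commuting structure within each layer. Granting this, evaluating on $\psi_n$ gives
\[ \|\psi_n\|^2 - \|\psi_{n+1}\|^2 \;=\; \expval{I - L^*L}{\psi_n} \;\le\; 4\diri(\psi_n) \;\le\; 4\gamma^n\diri(\phi), \]
so $(\|\psi_n\|^2)_n$ is monotone decreasing with summable drops and hence convergent. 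Its limit lies both in $P_\lat^\perp$ (closure) and, since $\diri(\lim_n \psi_n) = 0$, in $P_\lat$, hence equals $0$. Telescoping finally yields
\[ \vari(\phi) = \|\phi\|^2 = \sum_{n\ge 0}\bigl(\|\psi_n\|^2 - \|\psi_{n+1}\|^2\bigr) \;\le\; 4\diri(\phi)\sum_{n\ge 0}\gamma^n = \frac{4}{1-\gamma}\diri(\phi). \]

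The main obstacle is the operator inequality $I - L^*L \le 4H$: the straightforward telescoping $(I-L) = \sum_i L_1\cdots L_{i-1}(I - L_i)$ combined with Cauchy--Schwarz gives only $\|(I-L)\psi\|^2 \le g\diri(\psi)$, introducing a factor of the number of layers $g$ that would spoil the constant in the theorem. Obtaining a bound independent of $g$ requires more careful control of the ``dressed'' non-negative terms $N_k^*\, L_k^\perp\, N_k$ against $H$, exploiting that each $L_{k+1},\dots,L_g$ is a contraction and that the local projectors covered by $L_k^\perp$ are mutually commuting.
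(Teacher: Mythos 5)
Your argument coincides with the paper's proof in every essential respect: both telescope the quantity $\norm{L^n\phi}^2 - \norm{L^{n+1}\phi}^2 = DL(L^n\phi)$, both iterate the hypothesis to get $\diri(L^n\phi)\le\gamma^n\diri(\phi)$, and both rest the final step on the operator inequality $I - L^*L \le 4H$. The point you flag as ``the main obstacle'' --- proving $I - L^*L \le 4H$ with a constant independent of the number of layers $g$ --- is not actually an obstacle, because this is precisely the Converse Detectability Lemma (\cref{lemma:converse-dl} in the paper, cited from Gao and Nachtergaele), so the paper simply invokes it rather than reproving it. Your telescoping $I-L^*L = \sum_k N_k^*L_k^\perp N_k$ is indeed how that lemma is proved in the literature, but completing it requires a non-trivial combinatorial argument (roughly, pushing each $L_k^\perp$ across the layers to its right and absorbing cross-terms via a weighted Cauchy--Schwarz with layer-structure-dependent weights), and you are right that a naive Cauchy--Schwarz would incur a factor of $g$. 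The one genuine variation in your proposal is how you establish $\norm{L^n\phi}^2\to 0$ for $\phi\in P_\lat^\perp$: the paper first obtains $\norm{LP_\lat^\perp}<1$ from the (direct) Detectability Lemma (\cref{cor:dl}) and hence $L^n\to P_\lat$, whereas you argue from the hypothesis that $\diri(L^n\phi)\to 0$, then use compactness and frustration-freeness ($\diri(\psi)=0\Leftrightarrow\psi\in\operatorname{ran}P_\lat$) to conclude any limit point is in $P_\lat\cap P_\lat^\perp=\{0\}$. This is slightly more self-contained since it avoids the direct Detectability Lemma, though the precise wording (``Its limit lies\dots'') should be tightened to distinguish the limit of the scalar sequence $\norm{\psi_n}^2$ from a subsequential limit of the vectors $\psi_n$. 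With \cref{lemma:converse-dl} taken as given, your proof is complete and correct.
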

While similar in spirit to the Converse Detectability Lemma (see \cref{sec:dl}), we do not know if
these are equivalent, nor whether the hypothesis of \cref{THM:M2-GAP} is  necessary.

\begin{remark}[Comparison with the ``martingale method'']
  Nachtergaele \cite{Nachtergaele1996} presented a general method for proving the spectral gap for
  a class of spin-lattice models, which has become known as the \emph{martingale method}.  Given
  an increasing and absorbing sequence $\lat_n \to \inflat$, and a fixed parameter $l$, it requires three conditions ((C1), (C2), (C3)) to be
  satisfied uniformly along the sequence to prove a lower
  bound to the spectral gap.  Let us briefly recall what these conditions would be if we applied
  them to the setting we are considering, and compare them to \cref{cond:c}.  The first
  condition, denoted (C1) in the original paper, is automatically satisfied by finite range
  interactions, which is also the case we are considering here.  If we denote $A_n = \lat_n$ and
  $B_n = \lat_{n+1}\setminus \lat_{n-l}$ (where now $l$ is a parameter partially controlling the
  size of $A_n \cap B_n = \lat_n \setminus \lat_{n-l}$), then condition (C2) requires that
  $H_{A_n \cap B_n}$ has a spectral gap of $\gamma_l$ independently of $n$ (for every $n$ large
  enough). We do not need to require such assumption, since we are using a recursive proof.
  Condition (C3) can be restated, using our notation, as requiring that
  $\delta(A_n, B_n) \le \epsilon_l < \frac{1}{\sqrt{l+1}}$ for all $n$ large enough.

  Clearly, the big difference with \cref{cond:c} is that the requirement on $\delta$ is not of
  asymptotic decay, but only to be bounded by a specific constant.  Upon careful inspection,
  we see this is only a fair comparison in 1D. In higher dimensions, condition (C2) could be as hard
  to verify as the original problem of lower bounding the spectral gap, since the size of
  $A_n \cap B_n$ will grow with $n$. Condition (C3) is also clearly implied by \cref{cond:a}.
  Therefore, one could compare the method we propose with the martingale method as a strengthening
  of condition (C3) in exchange of a weakening of condition (C2), a trade-off which we hope makes it
  more applicable in dimensions $D>1$.
\end{remark}

\subsection{Example 1: translation invariant 1D spin chains}

To clarify the differences between \cref{cond:a,cond:b,cond:c}, let us consider the case of 1D spin
chains. We will consider a translational invariant model to further simplify the situation. Then we
can take, without loss of generality, $A=[0,n]$ and $B=[n-d,n-d+m]$, with $n$, $m$, $d$ being positive
integers such that $\min(m,n) > d$. The intersection $A\cap B = [n-d,n]$ has length $d+1$, so
that \cref{cond:a} is equivalent to the fact that the function
\begin{equation}\label{eq:example-cond-a}
  \delta(d) = \sup_{d<n,m} \delta([0,n], [n-d,n-d+m])
\end{equation}
has exponential decay in $d$. \Cref{cond:b} would relax this to a polynomial decay, but both require
a bound that is uniform in $n$ and in $m$.

We can now consider the larger interval in each $\mcl F_k$, namely $\Lambda_k = [0,(3/2)^{k+1}]$. Denoting $l_k
= (3/2)^k$, we can write $\Lambda_k$ and its $s$-decompositions as
\begin{align*}
  \Lambda_k &= \qty[0, 1] \cdot l_{k+1} \\
  A^i_k &= \qty[0, \frac{1}{2}  + \frac{i}{6s} ] \cdot l_{k+1}  \\
  B^i_k &= \qty[\frac{1}{2} + \frac{i}{6s} - \frac{1}{12s}, 1] \cdot l_{k+1},
\end{align*}
for $i=1, \dots, s$. The overlap $A^i_k \cap B^i_k$ has size $\frac{l_k}{12s}$ for every $i$. If we
fix for concreteness $s_k = l_k^{1/3}$, as in \cite{Cesi2001}, then we can define
\[ n_{i,k} = \ceil*{ \frac{1}{2}l_k + \frac{i}{6} l_k^{2/3}}, \quad m_{i,k} =
  \ceil*{\frac{1}{2}l_k - \frac{2i-1}{12} l_k^{2/3}}, \quad d_k = \ceil*{\frac{l^{2/3}_k}{12}}, \]
so that
\[
  A^i_k = [0, n_{i,k}] \quad B^i_k = [n_{i,k} + d_k, n_{i,k} + d_k + m_{i,k}].
\]
Note that $n_{i,k}$ and $m_{i,k}$ are always smaller than $24 \sqrt{3}\, d_k^{3/2}$
So we then see that in order to show that the model satisfies \cref{cond:c}, it would be sufficient
for example to verify that
\begin{equation}\label{eq:example-cond-c}
  \delta(d) = \sup_{d<n,m \le 24 \sqrt{3}\, d^{3/2}} \delta([0, n ], [n -d, n - d + m ])
\end{equation}
is decaying polynomially fast in $d$. Compared to \cref{eq:example-cond-a}, $n$ and $m$ are
restricted given a specific $d$, i.e. we only have to consider the case where they are at most a
constant times $d^{3/2}$.
It should be clear now that this restriction on the $n$ and $m$ depends on the choice of the scaling
of $s_k$. Choosing faster rates of growth for $s_k$ leads to more restrictive conditions (and thus in principle
easier to verify): the downside is that this will be reflected in the numerical bound for the
spectral gap, which will become worse (although finite).

\subsection{Example 2: PVBS models}
One notable model in dimension larger than 1 for which the original martingale method has been
successfully applied is the Product Vacua and Boundary State (PVBS) model
\cite{Bachmann_2015,Bishop2016}, a translation invariant, finite range, frustration free spin
lattice Hamiltonian, with parameters $D$ positive real numbers $(\lambda_1, \dots, \lambda_D )$.  We
refer to the original paper for the precise definition of the model.  The spectral gap of the PVBS
Hamiltonian is amenable to be analyzed using the ``1D version'' of the martingale method, applied
recursively in each of the dimensions, and it has been shown that in the infinite plane the
Hamiltonian is gapped if and only if not all $\lambda_j$ are equal to $1$.  In this section we show
that our result recovers the same finite-size limit analysis as in the original paper: for
simplicity we will only do the analysis in the case of rectangular regions, with the caveat that in that case
the finite-size gap closes if only one of the $\lambda_j$ is equal to $1$ (even if the GNS Hamiltonian is still gapped).

  In Ref. \cite[Lemma 3.3]{Bishop2016} it has been shown that
  in the case of two connected regions $A$ and $B$ such that $A\cap B$ is also connected,
  \begin{equation}\label{eq:pvbs-martingale}
    \delta(A,B)^2 = \frac{C(A\setminus B) C(B\setminus A)}{C(A)C(B)},
  \end{equation}
  where $C(X) = \sum_{x \in X} \prod_{j=1}^D \lambda_j^{2x_j}$ is the normalization constant of the
  model.  If we now consider $\lat \in \mcl F_k$ to be a rectangular region (so that every $A_i$ and
  $B_i$ appearing in the geometrical construction of \ref{prop:geometrical-construction} will also
  be rectangles), then the normalization constant $C(\lat)$ will be a product of different constants
  in each dimension independently. Assuming without loss of generality that the dimension being cut
  by \cref{prop:geometrical-construction} is the $D$-th, we see that if $\lambda_D = 1$ then
  \[ \delta(A_i,B_i) = \left( \frac{\abs{A_i\setminus B_i}}{\abs{A_i}} \frac{\abs{B_i\setminus
          A_i}}{\abs{B_i}} \right)^{1/2} = \order{1-\frac{1}{8s_k}} ,\]
  which is not infinitesimal.

  On the other hand, if $\lambda_D \neq 1$, then
  \[ \delta(A_i, B_i) = \left( \frac{\sum_{x=0}^{l_A-l} \lambda_D^{2x} }{\sum_{x=0}^{l_A} \lambda_D^{2x}}
      \frac{\sum_{x=l}^{l_B} \lambda_D^{2x} }{\sum_{x=0}^{l_B} \lambda_D^{2x}} \right)^{1/2} ,\]
  have denoted by $l_A$ (resp. $l_B, l$) the length of $A$ (resp. $B$, $A\cap B$) along dimension
  $D$. Therefore
  \[ \delta(A_i,B_i)\le
    \begin{cases}
      \lambda_D^{l+1} [(1-\lambda_D^{2(l_A+1)})(1-\lambda_D^{2(l_B+1)})]^{-1/2} & \text{if } \lambda_D < 1,\\
      \lambda_D^{-(l+1)} [(1-\lambda_D^{-2(l_A+1)})(1-\lambda_D^{-2(l_B+1)})]^{-1/2} & \text{if } \lambda_D > 1.
    \end{cases}
  \]
  If all $\lambda_j$ are distinct from one, then the PVBS model satisfies \cref{cond:a} with
  \[ \delta(l) = \frac{\lambda_*^{l}}{1-\lambda^2_*} , \quad \lambda_* = \max_i \min(\lambda_i, \lambda_i^{-1}), \]
  and therefore it is gapped by \cref{thm:cond-c-gap}. If at least one of them is equal to $1$
  then $\delta(l)$ will be lower bounded away from zero, and therefore the gap will close.
  Note that one could get a better estimate on the spectral gap by following the proof of
  \cref{thm:cond-c-gap}, and using a different $\delta(l)$ in each of the dimensions, instead that just taking
  the worst case as we did here.

\section{Detectability lemma and spectral gap}\label{sec:dl}

\subsection{The detectability lemma and its converse}

The Detectability Lemma \cite{Anshu2016,Aharonov2011,Aharonov2009} originated in the
context of the quantum PCP conjecture\cite{Aharonov2013}.  It has since then become a useful tool in
many-body problems.  A converse result is known as the Converse Detectability Lemma
\cite{Gao2015,Anshu2016}, and will also be used later. At the same time as we recall them, we will
reformulate them in terms of inequalities between some quadratic functionals.

In analogy to \cref{eq:variance}, given $L=\prod_{i=1}^g L_i$ we define the following quadratic functional on $\hs_\lat$
\begin{equation}
  \label{eq:dl-def}
  DL(\phi) = \braket{\phi} - \expval{L^*L}{\phi}.
\end{equation}
Before stating the Detectability Lemma and its converse, let us make some preliminary observations regarding $L$ and $DL(\phi)$.
\begin{remark}
  For any $L$ given above, denote $P$ the projector on the groundstate space of $H$. Then
  \begin{enumerate}[label={(\arabic*)}]
  \item $L P = P L = P$, and in particular $\comm{L}{P} = 0$;
  \item $\norm{L} \le 1$;
  \end{enumerate}
\end{remark}
\begin{proof}
  (1) follows from the definition of $L$ and frustration
  freedom. Since $L$ is a product of projectors its norm is bounded
  by 1, so also (2) is trivial.
  \end{proof}

\begin{proposition}\label{remark:dl-vari}
  For every $\phi \in \hs_\Lambda$ it holds that
  \begin{equation}\label{eq:dl-vari}
    DL(\phi) \le \vari(\phi) \le \frac1{1-\norm{LP^\perp}^2} DL(\phi),
  \end{equation}
  and $1/(1-\norm{LP^\perp}^2)$ is the smaller constant that makes the upper bound hold true.
\end{proposition}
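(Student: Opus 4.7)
The plan is to exploit the frustration-freeness identities $LP = PL = P$ to obtain an explicit formula for $\vari(\phi) - DL(\phi)$, which makes both inequalities almost immediate.

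First I would decompose any $\phi \in \hs_\Lambda$ as $\phi = P\phi + P^\perp\phi$. Using $LP = P$ one gets $LP\phi = P\phi$, while $PL = P$ gives $P(LP^\perp\phi) = PP^\perp\phi = 0$, so $LP^\perp\phi \in \range(P^\perp)$. Hence
\begin{equation*}
  L\phi = P\phi + LP^\perp\phi
\end{equation*}
is an orthogonal decomposition, and Pythagoras yields
\begin{equation*}
  \expval{L^*L}{\phi} = \norm{L\phi}^2 = \norm{P\phi}^2 + \norm{LP^\perp\phi}^2 = \expval{P}{\phi} + \norm{LP^\perp\phi}^2.
\end{equation*}
Subtracting this from $\braket{\phi}$ and recalling the definitions of $\vari$ and $DL$, I obtain the clean identity
\begin{equation*}
  \vari(\phi) - DL(\phi) = \norm{LP^\perp\phi}^2,
\end{equation*}
which is manifestly non-negative, proving the lower bound $DL(\phi) \le \vari(\phi)$.

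For the upper bound I would bound the right-hand side above by the operator norm: $\norm{LP^\perp\phi}^2 = \norm{LP^\perp (P^\perp\phi)}^2 \le \norm{LP^\perp}^2 \norm{P^\perp\phi}^2 = \norm{LP^\perp}^2 \vari(\phi)$. Plugging into the identity and rearranging gives $(1 - \norm{LP^\perp}^2)\vari(\phi) \le DL(\phi)$, which is the claimed inequality (note that $\norm{LP^\perp}\le \norm{L}\le 1$, with strict inequality needed only to divide, but the inequality is trivial when $\norm{LP^\perp}=1$).

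Finally, to show that the constant $1/(1-\norm{LP^\perp}^2)$ is optimal, I would extract from the definition of operator norm a sequence of unit vectors $\psi_n \in \range(P^\perp)$ with $\norm{L\psi_n} \to \norm{LP^\perp}$. Since $P\psi_n = 0$, one has $\vari(\psi_n) = 1$ and $DL(\psi_n) = 1 - \norm{L\psi_n}^2 \to 1 - \norm{LP^\perp}^2$, so the ratio $\vari(\psi_n)/DL(\psi_n)$ saturates the bound. I do not foresee any real obstacle here: the whole argument is an orthogonal-decomposition exercise, and the only subtlety is remembering that frustration freeness is what makes the cross term vanish; without it the identity for $\vari - DL$ would fail.
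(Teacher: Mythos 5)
Your proof is correct and takes essentially the same route as the paper's: both reduce to the identity $\vari(\phi) - DL(\phi) = \norm{LP^\perp\phi}^2$ (you via the orthogonal decomposition $L\phi = P\phi + LP^\perp\phi$ and Pythagoras, the paper via computing $\vari(L\phi) = \expval{P^\perp L^*L P^\perp}{\phi}$) and then bound the right-hand side by $\norm{LP^\perp}^2 \vari(\phi)$. The only cosmetic difference is in the optimality step, where you use a maximizing sequence while the paper picks a maximizing vector outright, which exists here since $\hs_\Lambda$ is finite-dimensional.
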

\begin{proof}
  Let us start by observing that
  \[ \vari(L \phi) = \expval{L^*P^\perp L}{\phi} = \expval{L^*L}{\phi} - \expval{P}{\phi} =
    \vari{\phi} - DL(\phi). \] On the one hand, since $\vari$ is a positive quadratic functional, we
  have that $\vari(L\phi) \ge 0$ and therefore $\vari(\phi) \ge DL(\phi)$.  On the other hand we
  have the following bound
  \[ \vari(\phi) - DL(\phi) = \expval{L^*P^\perp L}{\phi} = \expval{P^\perp L^*L P^\perp}{\phi} \le \norm{LP^\perp}^2 \expval{P^\perp}{\phi};  \]
  from which the upper bound in \cref{eq:dl-vari} follows by rearranging the terms. Optimality follows by choosing a $\phi$ such that $\norm{LP^\perp \phi} = \norm{LP^\perp} \norm{P^\perp\phi}$.
\end{proof}
As can be seen from \cref{eq:dl-vari}, if $\norm{LP^\perp}$ is smaller than 1, then $DL$ is up to constants equivalent to $\vari$.
The Detectability lemma and its converse then relate $DL$ to $\diri$, thus allowing to connect $\norm{LP^\perp}$ to the spectral gap, via
\cref{eq:variational-sg}.

\begin{lemma}[Detectability Lemma]\label{lemma:dl}
  With the notation above, it holds that
  \begin{equation}\label{eq:dl}
    \diri(L\phi) \le g^2 DL(\phi).
  \end{equation}
\end{lemma}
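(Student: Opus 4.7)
The plan is to decompose both sides of \eqref{eq:dl} along the $g$ layers and reduce to a per-layer estimate. On the right-hand side, the telescoping identity
\[
  I - L^{*} L = \sum_{i=1}^{g} K_{i+1}^{*}\, L_i^\perp\, K_{i+1},
  \qquad K_i := L_i L_{i+1}\cdots L_g,
\]
obtained from $K_{i+1}^{*} K_{i+1} - K_i^{*} K_i = K_{i+1}^{*} L_i^\perp K_{i+1}$ together with $L_i^2 = L_i$, yields $DL(\phi) = \sum_{i=1}^{g} \norm{L_i^\perp K_{i+1}\phi}^2$. Grouping the interactions by layer, $\diri(L\phi) = \sum_{i=1}^{g} \expval{H_i}{L\phi}$ where $H_i$ is the sum of projectors in the $i$-th layer. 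It then suffices to establish the per-layer bound $\expval{H_i}{L\phi} \le g\, DL(\phi)$, so that summation over $i$ yields the factor $g^2$.

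For each $i$, the key algebraic fact is $H_i L_i = L_i H_i = 0$, since $L_i$ is the orthogonal projection onto the common zero-energy subspace of the commuting layer-$i$ terms. Writing $L = \tilde K_i L_i K_{i+1}$ with $\tilde K_i := L_1\cdots L_{i-1}$, this produces the commutator identity
\[
  H_i L \;=\; H_i\,[\tilde K_i,\,L_i]\,K_{i+1},
\]
because $H_i L_i \tilde K_i K_{i+1} = 0$. A Leibniz expansion $[\tilde K_i, L_i] = \sum_{j=1}^{i-1} L_1\cdots L_{j-1}\,[L_j, L_i]\,L_{j+1}\cdots L_{i-1}$ exhibits $H_i L\phi$ as a sum of at most $g-1$ localized contributions, each supported at a single layer--layer interface $(j,i)$.

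The main obstacle is converting this sum into the target bound $g \sum_j \norm{L_j^\perp K_{j+1}\phi}^2$ without incurring an extensive factor like $\norm{H_i}$, which would scale with the number of sites. The resolution, following the original Aharonov--Arad--Landau--Vazirani proof of the detectability lemma, is to work term-by-term on $\expval{H_i}{L\phi} = \sum_{X\text{ in layer }i}\norm{P_X^\perp L\phi}^2$: for each such $X$, the identity $P_X^\perp L_i = 0$ gives the localized expression $P_X^\perp L = [P_X^\perp, \tilde K_i]\, L_i\, K_{i+1}$, so that $\norm{P_X^\perp L\phi}$ is controlled by commutators $[P_X^\perp, L_j]$ with $j<i$ and hence by the individual layer defects $L_j^\perp K_{j+1}\phi$. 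Since the layer-$i$ terms have pairwise disjoint supports, the per-$X$ estimates add without multiplicity; reassembling them and applying Cauchy--Schwarz on the resulting telescope recovers $\expval{H_i}{L\phi} \le g\, DL(\phi)$, and a final sum over $i$ produces $\diri(L\phi) \le g^2 DL(\phi)$.
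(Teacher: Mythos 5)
The paper does not prove the Detectability Lemma itself: it cites Ref.~\cite[Lemma 2]{Anshu2016} and takes it as a black box. So what you are doing is reconstructing a proof from scratch, and the question is whether your reconstruction closes.

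Your opening moves are correct and match the standard (Anshu--Arad--Vidick) proof: the telescoping identity
$\I - L^{*}L = \sum_{i=1}^{g} K_{i+1}^{*}L_i^\perp K_{i+1}$,
hence $DL(\phi)=\sum_i \norm{L_i^\perp K_{i+1}\phi}^2$, is right, and so is the per-layer splitting $\diri(L\phi)=\sum_i\expval{H_i}{L\phi}$ and the key ``blocking'' fact $h(X)L_i=0$ for $X$ in layer $i$. The problem is the bridge from the Leibniz expansion to the target bound, which is where the real work of the detectability lemma lives, and your sketch does not close it. Concretely, after localizing you reach terms of the form
$L_1\cdots L_{j-1}\,[P_X^\perp,L_j]\,K_{j+1}\phi$.
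Writing $[P_X^\perp,L_j]=-[P_X^\perp,L_j^\perp]$ and expanding, one of the two resulting terms, $P_X^\perp L_j^\perp K_{j+1}\phi$, is indeed dominated in norm by the layer defect $\norm{L_j^\perp K_{j+1}\phi}$; but the other, $L_j^\perp P_X^\perp K_{j+1}\phi$, has $L_j^\perp$ acting on the \emph{wrong} vector ($P_X^\perp K_{j+1}\phi$, not $K_{j+1}\phi$) and is not controlled by the layer defect. So the claim ``$\norm{P_X^\perp L\phi}$ is controlled \dots by the individual layer defects $L_j^\perp K_{j+1}\phi$'' does not follow from the commutator expansion as written. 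Likewise, ``applying Cauchy--Schwarz on the resulting telescope'' is not something you have actually set up: there is no identified telescoping sum of $g$ terms to which Cauchy--Schwarz (in the form $\lvert\sum_{\ell=1}^{g} a_\ell\rvert^2\le g\sum_\ell\lvert a_\ell\rvert^2$) applies, so the origin of the second factor of $g$ is not visible in your argument.

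The gap is not cosmetic: the Anshu--Arad--Vidick proof avoids commutators entirely. For each $Q=h(X)$ in layer $i$, they split every later layer as $L_j=L_j^{\mathrm{in}}L_j^{\mathrm{out}}$, with $L_j^{\mathrm{in}}$ the (bounded) sub-product of terms inside the light cone of $X$, pull the commuting $L_j^{\mathrm{out}}$ factors past $Q$, and then use that $Q$ annihilates $K_i\phi$ to telescope $QL\phi$ as a sum of at most $g$ increments $(\I-L_j^{\mathrm{in}})\cdot(\cdots)$. One factor of $g$ comes from Cauchy--Schwarz on that $g$-term telescope, the other from summing over the $g$ layers; disjointness of supports within a layer is what lets the per-$X$ increments be reassembled into the global quantity $\norm{L_j^\perp K_{j+1}\phi}^2$ without multiplicity, exactly the point you correctly identify but do not implement. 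To repair your write-up you should either switch to the $L_j^{\mathrm{in}}/L_j^{\mathrm{out}}$ split and exhibit the $g$-term telescope explicitly, or find a genuinely new way to handle the off-side term $L_j^\perp P_X^\perp K_{j+1}\phi$; as it stands the argument has a hole precisely where the $g^2$ has to come from.
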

The proof of this statement can be found in Ref. \cite[Lemma 2]{Anshu2016}. A simple corollary follows:
\begin{corollary}\label{cor:dl}
  If $\lambda$ is the spectral gap of $H$, then
  \begin{equation}\label{eq:dl-cor}
    \norm{L P^\perp}^2 \le \frac{1}{1+\lambda/g^2}.
  \end{equation}
  In particular, for finite systems $\norm{LP^\perp}<1$.
\end{corollary}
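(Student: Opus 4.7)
The plan is to chain together \cref{remark:dl-vari}, the Detectability Lemma, and the variational characterization of the spectral gap, applied to $L\phi$ rather than to $\phi$ itself.

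First I would recall from the proof of \cref{remark:dl-vari} the key identity
\[ \vari(L\phi) = \vari(\phi) - DL(\phi), \]
so that the Detectability Lemma can be rewritten as
\[ \diri(L\phi) \le g^{2}\bigl(\vari(\phi) - \vari(L\phi)\bigr). \]
Next I would apply the definition of the spectral gap (\cref{eq:variational-sg}) to the vector $L\phi$, obtaining $\lambda\,\vari(L\phi) \le \diri(L\phi)$. Substituting the previous inequality and rearranging gives
\[ (\lambda + g^{2})\,\vari(L\phi) \le g^{2}\,\vari(\phi), \qquad \text{i.e.} \qquad \vari(L\phi) \le \frac{1}{1+\lambda/g^{2}}\,\vari(\phi). \]

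To turn this into a bound on $\norm{LP^\perp}$, I would specialise to $\phi \in P^\perp\hs$, so that $\vari(\phi) = \norm{\phi}^{2}$. Frustration freedom gives $LP = P$, and the same argument applied to $L^{*}$ (which is again a product of groundstate projectors of the same local terms, just in reversed order) yields $PL = P$; hence $PLP^\perp = 0$ and therefore $\norm{LP^\perp\phi}^{2} = \norm{P^\perp LP^\perp\phi}^{2} = \vari(L\phi)$. Combining,
\[ \norm{LP^\perp\phi}^{2} \le \frac{1}{1+\lambda/g^{2}}\,\norm{\phi}^{2}, \]
and taking the supremum over $\phi \in P^\perp\hs$ delivers \cref{eq:dl-cor}. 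For finite systems $\lambda>0$ (since the spectrum is discrete and $H$ has a nontrivial kernel), so the right hand side is strictly less than $1$.

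There is no real obstacle here: the only subtle point is realising that one should use the Detectability Lemma and the gap estimate on the contracted vector $L\phi$, and that frustration freedom kills the $P$-component of $LP^\perp$ so that the variance of $L\phi$ coincides with $\norm{LP^\perp\phi}^{2}$ when $\phi\in P^\perp\hs$. Everything else is a one-line rearrangement.
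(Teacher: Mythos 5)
Your proof is correct and follows essentially the same route as the paper: it chains the variational gap inequality applied to $L\phi$, the Detectability Lemma, and the identity $\vari(L\phi) = \vari(\phi) - DL(\phi)$ from \cref{remark:dl-vari}. The only cosmetic difference is at the end — the paper appeals to the stated optimality of the constant $1/(1-\norm{LP^\perp}^2)$ in \cref{eq:dl-vari}, whereas you re-derive that fact on the spot by restricting to $\phi\in P^\perp\hs$ and observing $\vari(L\phi)=\norm{LP^\perp\phi}^2$ via $PLP^\perp=0$; this is exactly the computation the paper itself uses to justify optimality, so the two arguments coincide.
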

\begin{proof}
  If $\lambda$ is the spectral gap of $H$, then $\lambda \vari(\phi) \le \diri(\phi)$. In particular,
  $\lambda \vari(L\phi) \le \diri(L\phi) \le g^2 DL(\phi)$. But in \cref{remark:dl-vari} we have seen that $\vari(L\phi) = \vari(\phi) - DL(\phi)$,
  and therefore $\vari(\phi) \le (1+\frac{g^2}{\lambda})DL(\phi)$. The result follows from optimality of the constant in \cref{eq:dl-vari}.
\end{proof}

\begin{lemma}[Converse DL]\label{lemma:converse-dl}
  With the same notation as above,
  \begin{equation}\label{eq:converse-dl}
    DL(\phi) \le 4 \diri(\phi).
  \end{equation}
\end{lemma}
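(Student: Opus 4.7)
The plan is to obtain \cref{lemma:converse-dl} as a direct consequence of the non-commutative (quantum) union bound of Gao~\cite{Gao2015}, applied not to the layer projectors $L_i$ but to the full collection of individual interaction projectors $\{I-h(X)\}_X$.

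First I would use the layer decomposition to rewrite $L$ as an ordered product of the $(I-h(X))$. Since each $h(X)$ is an orthogonal projection, so is $I-h(X)$; and since the terms within layer $i$ have mutually disjoint (hence commuting) supports,
\[ L_i = \prod_{X \in \text{layer } i} (I - h(X)) \]
is itself an orthogonal projection, equal to the product taken in any intra-layer order. Composing layer by layer, $L=L_1 L_2 \cdots L_g$ can thus be expanded as an ordered product $L = \Pi_1 \Pi_2 \cdots \Pi_N$ with $\Pi_k = I - h(X_k)$, where $X_1,\dots,X_N$ is any enumeration of the interaction terms that respects the layer ordering.

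Then I would invoke the non-commutative union bound: for any finite sequence of orthogonal projections $\Pi_1,\dots,\Pi_N$ on a Hilbert space and any vector $\phi$,
\[
  \|\phi\|^2 - \|\Pi_1 \Pi_2 \cdots \Pi_N \phi\|^2 \le 4 \sum_{k=1}^N \|(I-\Pi_k)\phi\|^2.
\]
Substituting $\Pi_k = I - h(X_k)$, the left-hand side is exactly $DL(\phi)$; on the right, $(I-\Pi_k) = h(X_k)$ combined with $h(X)^2 = h(X)$ gives
\[
  DL(\phi) \le 4 \sum_X \|h(X)\phi\|^2 = 4 \sum_X \expval{h(X)}{\phi} = 4\, \diri(\phi),
\]
which is the claimed inequality.

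The hard part is not this reduction but the constant $4$ in the underlying union bound, which must be independent of the number $N$ of projectors. A naive telescoping $(I-L)\phi = \sum_k \Pi_1 \cdots \Pi_{k-1}(I-\Pi_k)\phi$ combined with the triangle inequality and Cauchy--Schwarz yields only an $N$-dependent constant, since the partial products $\Pi_1 \cdots \Pi_{k-1}$ can amplify the individual errors. Getting the $N$-independent constant requires the sharper quadratic argument of~\cite{Gao2015} (or the variant in~\cite{Anshu2016}), which carefully bounds the failure amplitudes $\|(I-\Pi_k)\Pi_{k-1}\cdots \Pi_1 \phi\|$ against $\|(I-\Pi_k)\phi\|$ by an inductive Cauchy--Schwarz argument. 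This is the step I would either quote directly or reprove along those lines.
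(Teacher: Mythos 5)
Your proposal is correct and takes essentially the same route as the paper, which simply defers to~\cite[Corollary 1]{Gao2015}; that corollary is precisely Gao's non-commutative union bound applied to the individual interaction projectors, exactly as you describe. Your explicit unpacking (writing $L$ as the ordered product of the $I-h(X)$, substituting into the union bound, and noting $h(X)^2 = h(X)$) is the intended, and standard, deduction.
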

The proof of this statement can be found in Ref. \cite[Corollary 1]{Gao2015}.
Again, from this functional formulation we can derive the usual statement of the Converse Detectability lemma
\begin{corollary}\label{cor:converse-dl}
  If $\lambda$ is the spectral gap of $H$, then
  \begin{equation}
    \lambda \ge \frac{1-\norm{LP^\perp}^2}4.
  \end{equation}
\end{corollary}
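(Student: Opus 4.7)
The plan is to chain together the two inequalities already established in \cref{remark:dl-vari} and \cref{lemma:converse-dl} to sandwich $\vari(\phi)$ between a multiple of $\diri(\phi)$, then invoke the variational characterization \cref{eq:variational-sg} of the spectral gap.

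First, I would recall from \cref{remark:dl-vari} the upper bound
\[
  \vari(\phi) \le \frac{1}{1-\norm{LP^\perp}^2} DL(\phi),
\]
which can be rearranged as $(1-\norm{LP^\perp}^2)\,\vari(\phi) \le DL(\phi)$. Next, I would apply the Converse Detectability Lemma (\cref{lemma:converse-dl}) to bound $DL(\phi) \le 4\,\diri(\phi)$. Combining these two inequalities yields
\[
  (1-\norm{LP^\perp}^2)\,\vari(\phi) \le 4\,\diri(\phi),
\]
valid for every $\phi \in \hs_\lat$.

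Finally, I would divide through by $4$ and observe that this is exactly the statement $\tfrac{1-\norm{LP^\perp}^2}{4}\,\vari(\phi) \le \diri(\phi)$, so by the variational characterization \cref{eq:variational-sg}, $\tfrac{1-\norm{LP^\perp}^2}{4}$ is admissible as a lower bound on the optimal ratio, hence $\lambda \ge \tfrac{1-\norm{LP^\perp}^2}{4}$. Note one should verify the inequality is non-vacuous: for finite systems \cref{cor:dl} already guarantees $\norm{LP^\perp}<1$, so the right-hand side is strictly positive.

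There is no real obstacle here since all the work has already been done in \cref{remark:dl-vari} and \cref{lemma:converse-dl}; the only subtle point is that the bound in \cref{remark:dl-vari} is tight (as noted there), so no estimate is lost when passing from $\vari$ to $DL$ in this direction, and the factor of $4$ in the final answer is inherited directly from the Converse Detectability Lemma.
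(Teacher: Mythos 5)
Your proposal is correct and matches the paper's argument exactly: chain the upper bound $\vari(\phi) \le (1-\norm{LP^\perp}^2)^{-1}DL(\phi)$ from \cref{remark:dl-vari} with $DL(\phi) \le 4\diri(\phi)$ from \cref{lemma:converse-dl}, then read off the constant via \cref{eq:variational-sg}. The only small caveat is that the tightness of \cref{remark:dl-vari} you mention is actually needed for \cref{cor:dl}, not here --- for this corollary one only needs the inequality itself.
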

\begin{proof}
  It follows from \cref{remark:dl-vari}.
\end{proof}

We are now ready to prove \cref{THM:M2-GAP}.
\begin{proof}
  From \cref{cor:dl}, we have that $\norm{L P^\perp} < 1$, and then \cref{remark:dl-vari} implies that $\lim_{n\to \infty} L^n = P$.
  Therefore
  \begin{eqnarray} \lim_{m\rightarrow\infty}\sum_{n=0}^m DL(L^{n}\phi) &=&  \lim_{m\rightarrow\infty}\sum_{n=0}^{m} \expval{(L^n)^*L^n}{\phi} - \expval{(L^{n+1})^*L^{n+1}}{\phi} \\
   &=&  \lim_{m\rightarrow\infty}\braket{\phi} -  \expval{(L^{m+1})^*L^{m+1}}{\phi}\\
    &=& \braket{\phi} - \expval{P}{\phi} = \vari(\phi).
  \end{eqnarray}
  By applying \cref{lemma:converse-dl} to each term in the summation, we obtain that:
  \[ \vari(\phi) = \sum_{n=0}^\infty DL(L^n\phi) \le 4 \sum_{n=0}^\infty \diri(L^n \phi) \le 4 \sum_{n=0}^\infty \gamma^n \diri(\phi) = \frac{4}{1-\gamma}\diri(\phi).  \]
\end{proof}

\subsection{Spectral gap implies \cref{cond:a}}
Let us start by proving the following converse relationship between spectral gap and $\delta(A,B)$.

\begin{theorem}\label{thm:gap-m2}
  Let $A, B \subset \inflat$ be finite, and let $l=\dist((A\cup B)\setminus A, (A\cup B)\setminus B)$.
  If $H_\Lambda$ for $\Lambda={A\cup B}$  is a finite range Hamiltonian with spectral gap $\lambda_\Lambda$, then
  \begin{equation}
    \label{eq:gap-m1}
    \delta(A,B) \le \frac{1}{(1+\lambda_\Lambda/g^2)^{l/2}},
  \end{equation}
  where $g$ is a constant depending only on $\inflat$ and on the range of $H$.
\end{theorem}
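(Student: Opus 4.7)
The plan is to bound $\delta(A,B) = \|P_AP_B - P_\Lambda\|$ by means of a carefully chosen approximate ground state projector (AGSP) $L$ for $H_\Lambda$ and the detectability lemma. Using the remark following \cref{def:delta}, frustration freeness yields $P_AP_B - P_\Lambda = P_\Lambda^\perp P_AP_B$. Since $L$ is built from a $g$-layer decomposition of the range-$r$ interactions ($g = g(\Gamma,r)$ a geometric constant depending only on the lattice and the range), \cref{cor:dl} gives $\|L P_\Lambda^\perp\|^2 \le (1+\lambda_\Lambda/g^2)^{-1}$, and hence
\[
\|L^n P_\Lambda^\perp\| \le (1+\lambda_\Lambda/g^2)^{-n/2}, \qquad n \ge 1.
\]
The whole strategy is to engineer $L$ so that $L^{l}$ leaves $P_AP_B$ invariant up to the ground space (i.e.\ $L^l(P_AP_B - P_\Lambda) = P_AP_B - P_\Lambda$), from which the bound would follow immediately:
\[
\delta(A,B) = \|L^l P_\Lambda^\perp P_AP_B\| \le \|L^l P_\Lambda^\perp\|\,\|P_AP_B\| \le (1+\lambda_\Lambda/g^2)^{-l/2}.
\]

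To set up such an $L$, I would exploit the hypothesis $l = \dist(\Lambda\setminus A, \Lambda\setminus B)$. The interesting regime is $l > r$; otherwise $\delta(A,B) \le 1$ already satisfies the claim after absorbing constants into $g$. When $l > r$, no single $h(X)$ can straddle both $\Lambda\setminus A$ and $\Lambda\setminus B$, so within each layer $i$ the commuting projectors can be partitioned as $L_i = L_i^{(A)} L_i^{(B)}$, where $L_i^{(A)}$ is supported in $A$ and $L_i^{(B)}$ in $B$. Frustration freeness of $H_A$ and $H_B$ gives the two identities $L_i^{(A)} P_A = P_A$ and $L_i^{(B)} P_B = P_B$, which are the basic ingredients for propagating $P_AP_B$ through applications of $L$.

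The main obstacle — and the reason the exponent carries a $g$ rather than just $r$ — is that the ``$B$-pieces'' $L_i^{(B)}$ can have support extending up to $r-1$ sites into $A\cap B$ and therefore do not commute with $P_A$ outright; a single application of $L$ does not literally fix $P_AP_B$. I would resolve this by an induction on $l$, setting up a nested family of shrinking overlaps (one shell of $A\cap B$ consumed per application of a layer of $L$). At step $k$, the boundary $B$-projectors that protrude into the current overlap are absorbed via the identity $L^{(A)}_{i+1}P_A = P_A$ coming from the next layer, since their support still sits inside the unshrunken $A$; the residual action is passed to the inductive hypothesis with $l$ replaced by $l-1$. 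Executing this geometric bookkeeping precisely, so that each of the $l$ iterations contributes exactly the DL contraction factor $(1+\lambda_\Lambda/g^2)^{-1/2}$ and no extra losses accumulate from the boundary, is the crux of the argument.
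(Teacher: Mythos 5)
Your core strategic target --- engineering $L$ so that $L^l(P_AP_B - P_\Lambda) = P_AP_B - P_\Lambda$ exactly --- is impossible whenever $\delta(A,B) > 0$, so no amount of light-cone bookkeeping can rescue the argument as stated. Indeed, $L = L_1\cdots L_g$ is a product of orthogonal projections with $\norm{L}\le 1$, so any nonzero $\psi$ with $L^l\psi = \psi$ satisfies $\norm{L_g\psi} = \norm{\psi}$ and hence $L_g\psi = \psi$; peeling off the factors one by one gives $L_i\psi = \psi$ for every $i$, which by frustration freeness places $\psi$ in the ground state space. But the range of $P_AP_B - P_\Lambda = P_\Lambda^\perp P_AP_B$ lies in the orthogonal complement of the ground state space, so your premise forces $P_AP_B - P_\Lambda = 0$, i.e.\ $\delta(A,B)=0$. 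Even worse, the detectability lemma tells you that $L^l$ \emph{contracts} on $P_\Lambda^\perp$ by $\epsilon_\Lambda^l$, so $\norm{L^l P_\Lambda^\perp P_AP_B} \le \epsilon_\Lambda^l\,\norm{P_\Lambda^\perp P_AP_B} = \epsilon_\Lambda^l\,\delta(A,B)$: the quantity you want to equate with $\delta(A,B)$ is strictly smaller by that very contraction factor, and your chain of inequalities would only yield the vacuous $\delta(A,B)\le\epsilon_\Lambda^l\,\delta(A,B)$.

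You have correctly identified the right ingredients --- the DL contraction $\norm{LP_\Lambda^\perp}^2 \le (1+\lambda_\Lambda/g^2)^{-1}$, the layer splitting $L_i = L_i^{(A)}L_i^{(B)}$ with supports in $A$ and $B$, and the frustration-free identities $L_i^{(A)}P_A = P_A$ --- but the missing idea is that both sides of the comparison must be approximate rather than exact. In the paper's \cref{lemma:gap-m1}, one shows that for $q\le l$ the light-cone bookkeeping across the $q$-fold layering yields a factorization $L^q = M_AM_B$ with $M_A$ supported in $A$ and $M_B$ in $B$, and then proves $M_A$ and $M_B$ are DL-good \emph{approximations} of $P_A$, $P_B$, with errors like $\norm{P_A(P_B-M_B)}\le\epsilon_\Lambda^q$. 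Since $L^q$ is $\epsilon_\Lambda^q$-close to $P_{A\cup B}$ (by the DL) and $M_AM_B$ is $\epsilon_\Lambda^q$-close to $P_AP_B$ (by the approximations), one concludes $\norm{P_AP_B - P_{A\cup B}} \lesssim \epsilon_\Lambda^l$. The boundary terms you worried about are not absorbed into an exact invariance; they are \emph{estimated}, and that estimate is what produces the bound.
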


In order to prove this result, we will make use of the Detectability Lemma.
With the same notation as in \cref{lemma:dl}, it implies that
$ \norm{L{P_{\Lambda}^\perp}}^2 \le \frac{1}{1+\lambda_{\Lambda}/g^2}$.
By taking $q$-powers of $L$ and iterating the previous bound $q$ times we obtain
\[ \norm{L^q P_{\Lambda}^\perp} \le \frac{1}{(1+\lambda_{\Lambda}/g^2)^{q/2}} = \epsilon_\Lambda^q, \]
since $LP_{\Lambda}^\perp \subset P_{\Lambda}^\perp$, where we have denoted $\epsilon_\Lambda =
(1+\lambda_\Lambda/g^2)^{-1/2} < 1$. Therefore, if $H_{\Lambda}$ is gapped, $L^q$ will be an
exponentially good approximation of $P_{\Lambda}$, with $q$ chosen independently of $\Lambda$.
We now want to show that $L^q$ can be split as a product of two terms
$L^q = M_A M_B$ in such a  way that both $M_A$ and $M_B$ are good
approximations to $P_A$ and $P_B$, using a strategy presented in Ref. \cite{Kastoryano2016}.

\begin{lemma}\label{lemma:gap-m1}
  With the notation defined above, if $q\le l$, then there exist two operators $M_A$ and $M_B$,
  respectively acting on $A$ and on $B$, such that $L^q=M_AM_B$ and the following holds:
  \begin{equation}
    \norm{P_A - M_A} \le \epsilon_\Lambda, \quad \norm{(P_A -M_A)M_B} \le \epsilon_\Lambda^q;
  \end{equation}
  and the same holds with $A$ and $B$ interchanged.
\end{lemma}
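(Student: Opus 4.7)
The plan is to split $L^q$ as $M_A M_B$ by a lightcone-type rearrangement of the local projectors: each of the $q$ copies of $L$ is split according to its own nested ``shrinking'' region $C_k\subseteq A$, after which every $B$-side factor is commuted past every $A$-side factor from a strictly later time step.

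Concretely, for $k=1,\dots,q$ set $C_k=\{x\in A : \dist(x,B\setminus A)>kr\}$, so that $C_1\supseteq C_2\supseteq\cdots\supseteq C_q\supseteq A\setminus B$, where the last inclusion is where the hypothesis $q\le l$ (absorbing the range $r$ into the constants) enters. For the $k$-th copy of $L=L_1\cdots L_g$ I split each layer as $L_i=L_{i,k}^A L_{i,k}^B$, where $L_{i,k}^A\in\alg_{C_k}$ collects the projectors of $L_i$ with support in $C_k$ and $L_{i,k}^B\in\alg_{\Lambda\setminus C_k}\subseteq\alg_B$ collects the rest. Within a single layer the two pieces commute because the projectors are pairwise disjoint. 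The crucial observation is that for $j>k$, $L_{i,k}^B$ has support in $\Lambda\setminus C_k$ while $L_{i',j}^A$ has support in $C_j\subseteq C_k$, so these supports are disjoint and the two factors commute. Commuting every $L_{i,k}^B$ past every $L_{i',j}^A$ with $j>k$ reorganizes the product as
\[ L^q=\underbrace{\Big(\prod_{k=1}^q\prod_{i=1}^g L_{i,k}^A\Big)}_{M_A}\cdot\underbrace{\Big(\prod_{k=1}^q\prod_{i=1}^g L_{i,k}^B\Big)}_{M_B}, \]
with $M_A\in\alg_{C_1}\subseteq\alg_A$ and $M_B\in\alg_{\Lambda\setminus C_q}\subseteq\alg_B$.

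For the norm bounds I combine frustration freeness with an iterated detectability estimate. Every projector in $L$, and hence in $M_A$, fixes $P_A$, so $M_A P_A=P_A$ and $P_A-M_A=-P_A^\perp M_A$. Therefore
\[ (P_A-M_A)M_B=-P_A^\perp M_A M_B=-P_A^\perp L^q. \]
Frustration freeness also gives $P_\Lambda\le P_A$, hence $P_A^\perp P_\Lambda=0$, so $P_A^\perp L^q=P_A^\perp L^q P_\Lambda^\perp$; iterating \cref{cor:dl} $q$ times yields $\norm{L^q P_\Lambda^\perp}\le\epsilon_\Lambda^q$ and thus $\norm{(P_A-M_A)M_B}\le\epsilon_\Lambda^q$. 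For the weaker bound $\norm{P_A-M_A}=\norm{P_A^\perp M_A}\le\epsilon_\Lambda$, I use the same ``$P_A^\perp P_\Lambda=0$'' trick to write $P_A^\perp M_A=P_A^\perp M_A P_\Lambda^\perp$ and estimate $\norm{M_A P_\Lambda^\perp}$ by isolating from $M_A$ a single full pass of $L$'s $A$-side, to which one application of the Detectability Lemma contributes the single factor of $\epsilon_\Lambda$.

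\textbf{Main obstacle.} The bulk of the technical work sits in the combinatorial bookkeeping of the staircase commutation: one must carefully verify that for every $k<j$ the supports of $L_{i,k}^B$ and $L_{i',j}^A$ are genuinely disjoint, using the nesting $C_j\subseteq C_k$ that comes from $q\le l$. The more subtle point is the first norm bound $\norm{P_A-M_A}\le\epsilon_\Lambda$: since only a spectral gap for $H_\Lambda$ is available (not for $H_A$), the Detectability Lemma cannot be applied directly inside $A$, and the estimate must instead be routed through the identity $P_A^\perp P_\Lambda=0$ and a single copy of $L$ acting on $\Lambda$.
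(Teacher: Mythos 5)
Your overall strategy matches the paper's (a light-cone rearrangement of $L^q$ into $M_AM_B$, followed by Detectability-Lemma estimates), and your clean identity $(P_A-M_A)M_B=-P_A^\perp L^q$ — using $P_A M_A=P_A$, $P_A^\perp P_\Lambda=0$ and $\norm{L^qP_\Lambda^\perp}\le\epsilon_\Lambda^q$ — is a genuinely cleaner route to the second bound than the paper's ``redistribute the projectors of $M_A$ to complete $L_B^q$'' argument, which relies on a rearrangement that is not fully spelled out. That part of your proposal is a real improvement. However, there are two concrete gaps.

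\emph{Within-copy commutation fails.} You assign one cut $C_k$ to all $g$ layers of the $k$-th copy of $L$. You then need $L_{i,k}^B$ to commute with $L_{i',k}^A$ for $i<i'$ within the same copy. But $L_{i,k}^B$ collects projectors whose support is \emph{not} contained in $C_k$; the support of such a projector can still intersect $C_k$ (up to distance $r$ inside it), so $L_{i,k}^B$ is supported on $\Lambda\setminus C_{k+1}$, not on $\Lambda\setminus C_k$ as you claim. Since $L_{i',k}^A$ is supported on $C_k\supsetneq C_{k+1}$, the two can overlap and need not commute. (Your cross-copy argument for $j>k$ does work, once the support of $L_{i,k}^B$ is corrected to $\Lambda\setminus C_{k+1}$, since $C_j\subseteq C_{k+1}$.) The fix is to make the cut shrink once per \emph{layer} rather than once per copy: define $C_1\supset C_2\supset\cdots\supset C_{gq}$ and split the $m$-th layer of $L^q$ along $C_m$. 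With this refinement every $B$-factor commutes with every later $A$-factor, the rearrangement is valid, and the condition $q\le l$ (up to $gr$ absorbed into constants) is what ensures $C_{gq}\supseteq A\setminus B$ so that $M_A\in\alg_A$, $M_B\in\alg_B$.

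\emph{The first norm bound is not established.} You reduce $\norm{P_A-M_A}=\norm{P_A^\perp M_A}$ to $\norm{M_AP_\Lambda^\perp}$ (the reduction itself is fine), and propose to control it by ``isolating a single full pass of $L$'s $A$-side'' and applying the Detectability Lemma once. But $M_A$ is a \emph{truncated} product: each layer of $M_A$ contains only the projectors inside the corresponding shrinking cut, so no subproduct of $M_A$ is a full layer of $L$ on $\Lambda$, and the Detectability Lemma for $H_\Lambda$ (which is what produces $\epsilon_\Lambda$) does not apply to it. Nor does $M_A$ contain a full APGP $L_A$ for $H_A$ — and even if it did, that would give $\epsilon_A=(1+\lambda_A/g^2)^{-1/2}$, a bound in terms of the gap of $H_A$ rather than $\epsilon_\Lambda$. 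As written, this step is a claim, not a proof; the argument needs to either exhibit a full $L_A$ inside $M_A$ (and accept a bound in $\epsilon_A$) or find a different route to a nontrivial estimate on $\norm{P_A^\perp M_A}$.
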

\begin{proof}
  Let us start by defining $M_A$ and $M_B$ as follows: we will group the projectors appearing in $L^q$ in two disjoint groups, such that $M_A$ will be the product (in the same order as they appear in $L^q$) of the projectors of one group, $M_B$ the product of the rest, and $L^q=M_A M_B$.
  In order to do so, we will consider the layers $L_1, \dots, L_g$ sequentially (following the order in which are multiplied in $L$), and then we will
start again from $L_1$ up to $L_g$, until we have considered $\lfloor {gq/2} \rfloor $ different layers. Each layer will be split into two parts, where terms of one of them will end up appearing in $M_A$ and terms in the other will appear in $M_B$.
In the first layer, we will only include in $M_A$ terms which intersect $(A\cup B)\setminus B$.
From the second layer, we only included terms which intersect the support of the terms considered from the first. We keep doing this recursively, when at each layer we include terms which intersect the support of the selected terms of the previous step (one can see this as a sort of light-cone, defined by the layer structure, generated by $(A\cup B)\setminus B$, as depicted in \cref{fig:3}).
The remaining $\lceil {gq/2} \rceil$ are treated similarly, but starting instead from the end of the product, and reversing the role of $B$ and $A$.
At this point, it should be clear that by construction $L^q=M_AM_B$, since every projector appearing in $L$ has been assigned to either $M_A$ or $M_B$, and the two groups can be separated without breaking the multiplication order.
If $q \le l$, then $M_B$ will be supported in $B$, and $M_A$ will be supported on $A$.

\begin{figure}[h]
\centering
  \includegraphics[scale=0.55]{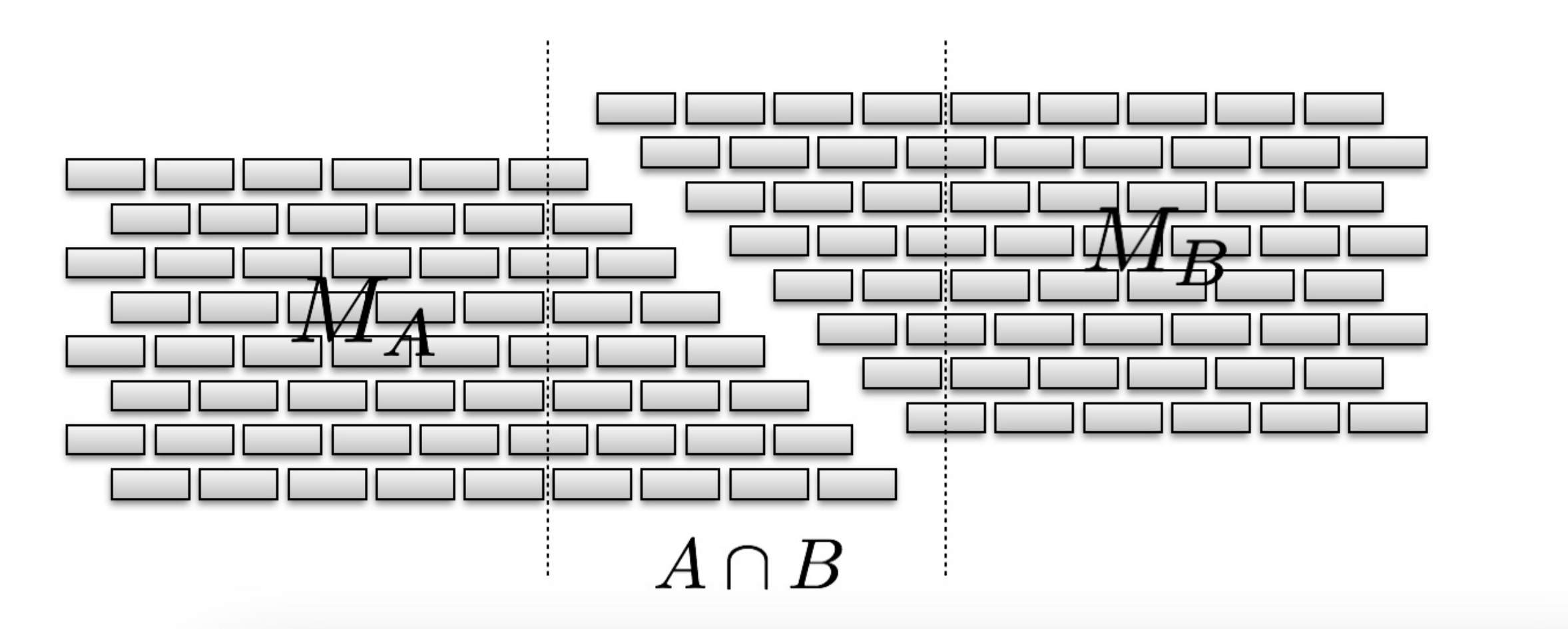}
    \caption{Depiction of the decomposition of the operator $L^q=M_A M_B$, with $g=2$ distinct layers in $L$. }
    \label{fig:3}
\end{figure}

Denote with $L_A$ and $L_B$ the approximate ground state projections of $P_A$ and $P_B$ respectively, as in \cref{lemma:dl}.
Then we have that $\norm{P_A - L_A^q} \le \epsilon_\Lambda^q$ and the same for $B$.
It should be clear that $M_A$ and $M_B$ contain strictly more projection terms than $L_A$ and $L_B$, and therefore
$P_AM_A = M_A P_A = P_A$ and $\norm{M_AP_A^\perp} \le \epsilon_\Lambda$, and the same holds for $B$.
Observe that we can write $P_AM_B = P_A M_A M_B := P_A R L_B^q$, where we
have redistributed the projectors of $M_A$ in order to ``fill'' the missing ones in $M_B$ to complete it to $L_B^q$. What is left is put into $R$,
which can be reabsorbed into $P_A$. Therefore $P_AM_B = P_AL_B^q$, and this implies that
\[ \norm{P_A(P_B-M_B)} = \norm{P_A(P_B-L_B^q)} \le \epsilon_\Lambda^q.\]
The same construction (but exchanging the roles of $A$ and $B$) can be done in order to bound
$\norm{(P_A-M_A)M_B} \le \epsilon_\Lambda^q$.
\end{proof}
With this construction, we can easily prove \cref{thm:gap-m2}.
\begin{proof}[Proof of  \cref{{thm:gap-m2}}]
We observe that
\[ P_AP_B - M_AM_B = P_A(P_B - M_B)  + (P_A - M_A)M_B . \]
We can now apply \cref{lemma:gap-m1}, and choose $q=l$ to obtain
  \[ \norm{P_A P_B - P_{A \cup B}} \le \frac{1}{(1+\lambda_\Lambda/g^2)^{l/2}} .\]
\end{proof}

In the next section, we will show that condition \cref{cond:c} implies the spectral gap.
Then \cref{thm:gap-m2} allows us to prove the converse, therefore showing the equivalence stated in
\cref{thm:main-intro}.

\section{\Cref{cond:c} implies spectral gap}\label{sec:cond-c-gap}
\subsection{Quasi-factorization of excitations}
We will start with some useful inequalities regarding orthogonal projectors in Hilbert spaces.

\begin{lemma}\label{lemma:proj}
Let $P$ and $Q$ be two orthogonal projections on a Hilbert space $\hs$.
Then it holds that
\begin{equation}\label{eq:proj}
-\acomm{P}{Q} \le 1-P-Q \le \acomm{P^\perp}{Q^\perp}
\end{equation}
where $\acomm{P}{Q} = PQ +QP $ is the anti-commutator.
\begin{proof}
  We start by observing that $ -1 \le P - Q \le 1$, since $P$ and $Q$ are positive and bounded by
  $1$, and therefore $ 0 \le (P-Q)^2 \le 1$.
  By observing that $(P-Q)^2 = P + Q -\acomm{P}{Q}$, it immediately follows the l.h.s. of
  \cref{eq:proj}:
  \begin{equation}\label{eq:proj-1}
    1-P-Q \ge - \acomm{P}{Q}.
  \end{equation}
  By algebraic manipulation we can show that
  \begin{multline*}
    \{P,Q\} = (1-P^\perp)(1-Q^\perp) + (1-Q^\perp)(1-P^\perp) = \\
    = 2(1 - P^\perp - Q^\perp) + \acomm{P^\perp}{Q^\perp}
    = -2(1- P - Q) + \acomm{P^\perp}{Q^\perp}.
  \end{multline*}
  Applying \cref{eq:proj-1} we obtain that
  \[ \acomm{P^\perp}{Q^\perp} = \acomm{P}{Q} + 2(1-P-Q) \ge 1-P-Q. \]
\end{proof}
\end{lemma}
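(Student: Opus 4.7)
The plan is to reduce both inequalities in \cref{eq:proj} to positivity of a single operator, namely $1 - (P-Q)^2$. First I would expand
\[ (P-Q)^2 = P^2 - PQ - QP + Q^2 = P + Q - \acomm{P}{Q}, \]
using $P^2 = P$ and $Q^2 = Q$. Rearranging this gives the key identity $1 - P - Q + \acomm{P}{Q} = 1 - (P-Q)^2$.

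Since $P$ and $Q$ are orthogonal projections we have $-1 \le P - Q \le 1$ as self-adjoint operators, and consequently $(P-Q)^2 \le 1$. The identity above therefore shows that $1 - P - Q + \acomm{P}{Q} \ge 0$, which is exactly the left inequality $-\acomm{P}{Q} \le 1 - P - Q$.

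For the upper bound I would exploit the symmetry $P \leftrightarrow P^\perp$, $Q \leftrightarrow Q^\perp$. Writing $P^\perp = 1 - P$ and $Q^\perp = 1 - Q$ and expanding directly gives
\[ \acomm{P^\perp}{Q^\perp} = (1-P)(1-Q) + (1-Q)(1-P) = 2(1 - P - Q) + \acomm{P}{Q}, \]
so that $\acomm{P^\perp}{Q^\perp} - (1 - P - Q) = (1 - P - Q) + \acomm{P}{Q} = 1 - (P-Q)^2 \ge 0$ by the same identity and positivity used above. This is precisely the right inequality of \cref{eq:proj}.

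I do not anticipate any serious obstacle: the entire argument rests on the algebraic identity $1 - P - Q + \acomm{P}{Q} = 1 - (P-Q)^2$ together with the elementary norm bound $\norm{P-Q} \le 1$. The only observation of note is that the involution $P \mapsto P^\perp$, $Q \mapsto Q^\perp$ swaps the two ends of the desired double inequality, so the upper and lower bounds are really two facets of the same statement rather than independent claims.
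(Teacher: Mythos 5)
Your proof is correct and follows essentially the same route as the paper: expand $(P-Q)^2 = P+Q-\acomm{P}{Q}$, use the bound $(P-Q)^2 \le 1$ for the lower inequality, and then the algebraic identity $\acomm{P^\perp}{Q^\perp} = 2(1-P-Q) + \acomm{P}{Q}$ for the upper one. The only (minor) stylistic difference is that you observe both inequalities are exchanged by the involution $P \mapsto P^\perp$, $Q \mapsto Q^\perp$ and reduce both directly to positivity of $1-(P-Q)^2$, whereas the paper derives the upper bound by plugging the already-established lower bound into the identity; the two are equivalent.
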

We are now ready to prove the following quasi-factorization result.

\begin{lemma}[Quasi-factorization of excitations]
\label{lemma:quasi-factorization}
Let $A, B$ be subsets of $\lat$. Then it holds that
\begin{equation}\label{eq:quasi-factorization}
c \expval{P_{A\cup B}^\perp}{\phi} \le
\expval{P_{A}^\perp}{\phi} + \expval{P_{B}^\perp}{\phi},
\end{equation}
where $c = 1-2\delta(A,B).$
\begin{proof}
Notice that frustration freedom implies that
$P_{A\cup B}^\perp P_A^\perp = P_A^\perp P_{A\cup B}^\perp =
P_A^{\perp}$, and the same holds for $P_B^\perp$. Therefore if
$P_{A\cup B}^\perp \ket{\phi} = 0$, both sides of the equation reduce to $0$,
and we can restrict ourselves to the case in which $\ket{\phi}$ is
contained in the image of $P_{A\cup B}^\perp$.
We can then apply \cref{eq:proj} to $P_A^\perp$ and $P_B^\perp$ and
we obtain:
\[
\expval{P_{A\cup B}^\perp}{\phi} \le
\expval{P_A^\perp}{\phi} + \expval{P_B^\perp}{\phi} +
\expval{P_{A\cup B}^\perp \acomm{P_A}{P_B}P_{A\cup B}^\perp}{\phi}.
\]
To conclude the proof, we just need to observe that
\[ P_{A\cup B}^\perp P_A P_B P_{A\cup B}^\perp =
  (P_A - P_{A\cup B})(P_B - P_{A \cup B}),
\]
and that therefore by applying the Cauchy-Schwartz inequality
\begin{multline*}
\expval{P_{A\cup B}^\perp P_A P_B P_{A\cup B}^\perp}{\phi} \\ \le
\norm{P_{A\cup B}^\perp \ket\phi} \norm{(P_A - P_{A\cup B})(P_B - P_{A
    \cup B})P_{A\cup B}^\perp \ket\phi} \\ \le
\norm{(P_A - P_{A\cup B})(P_B - P_{A \cup B})}\norm{P_{A\cup B}^\perp
  \ket\phi}^2 \\
= \norm{(P_A - P_{A\cup B})(P_B - P_{A \cup B})}\expval{P_{A\cup
    B}^\perp}{\phi}.
\end{multline*}
Since the same holds for $P_{A\cup B}^\perp P_A P_B P_{A\cup
  B}^\perp$, and the operator norm is invariant under taking the adjoint, we obtain that
\[ \expval{P_{A\cup B}^\perp \acomm{P_A}{P_B}P_{A\cup
      B}^\perp}{\phi}  \le
2 \norm{(P_A - P_{A\cup B})(P_B - P_{A \cup B})}\expval{P_{A\cup
    B}^\perp}{\phi},  \]
which concludes the proof.
\end{proof}
\end{lemma}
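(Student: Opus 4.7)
The plan is to combine the operator inequality in \cref{lemma:proj} with the frustration-free identity that rewrites the relevant cross term in terms of the product $(P_A - P_{A\cup B})(P_B - P_{A\cup B})$, whose norm is by definition $\delta(A,B)$.

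First, I would normalize the problem using frustration-freeness. Since $P_{A\cup B} \le P_A$ and $P_{A\cup B} \le P_B$, the identities $P_A^\perp = P_A^\perp P_{A\cup B}^\perp$ and $P_B^\perp = P_B^\perp P_{A\cup B}^\perp$ hold. Hence \cref{eq:quasi-factorization} is trivial when $P_{A\cup B}^\perp \ket{\phi}=0$ (both sides vanish), and I may restrict to vectors $\ket{\phi}$ in the range of $P_{A\cup B}^\perp$, or equivalently conjugate every operator appearing in the target inequality by $P_{A\cup B}^\perp$.

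Next, I would apply the right-hand inequality in \cref{eq:proj} to the two orthogonal projections $P_A^\perp$ and $P_B^\perp$, obtaining $1 - P_A^\perp - P_B^\perp \le \acomm{P_A}{P_B}$. Conjugating by $P_{A\cup B}^\perp$ and taking the expectation in $\ket{\phi}$ yields
\[
\expval{P_{A\cup B}^\perp}{\phi} \le \expval{P_A^\perp}{\phi} + \expval{P_B^\perp}{\phi} + \expval{P_{A\cup B}^\perp \acomm{P_A}{P_B} P_{A\cup B}^\perp}{\phi}.
\]
The remaining task is to absorb the last term into $\delta(A,B)\expval{P_{A\cup B}^\perp}{\phi}$. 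Using frustration-freeness a second time, $P_{A\cup B}^\perp P_A P_B P_{A\cup B}^\perp = (P_A - P_{A\cup B})(P_B - P_{A\cup B})$, and a Cauchy–Schwarz estimate (together with submultiplicativity of the operator norm) bounds its expectation by $\delta(A,B)\expval{P_{A\cup B}^\perp}{\phi}$. The adjoint piece $P_{A\cup B}^\perp P_B P_A P_{A\cup B}^\perp$ contributes the same amount, so the anti-commutator term is bounded by $2\delta(A,B)\expval{P_{A\cup B}^\perp}{\phi}$. Rearranging gives the claim with $c = 1 - 2\delta(A,B)$.

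The main obstacle I foresee is the identification of the anti-commutator term (produced abstractly by \cref{lemma:proj}) with the quantity that defines $\delta(A,B)$; once the algebraic rewriting $P_{A\cup B}^\perp P_A P_B P_{A\cup B}^\perp = (P_A - P_{A\cup B})(P_B - P_{A\cup B})$ is in hand, the remaining steps are a direct Cauchy–Schwarz argument.
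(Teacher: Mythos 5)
Your proposal is correct and follows essentially the same route as the paper: reduce to the range of $P_{A\cup B}^\perp$ via frustration-freeness, apply the right-hand inequality of \cref{eq:proj} to $P_A^\perp, P_B^\perp$, identify the conjugated cross term with $(P_A - P_{A\cup B})(P_B - P_{A\cup B})$, and bound by Cauchy--Schwarz plus the adjoint. No gaps.
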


\begin{remark}[Comparison with the Converse DL]
A bound similar to what we have obtained in the previous lemma could
also have been derived from the converse of the
Detectability Lemma (\cref{lemma:converse-dl}). Indeed, if we apply it
to the Hamiltonian $P^\perp_A + P^\perp_B$, we obtain the following
\[ \norm{\phi}^2 - \norm{P_AP_B \ket\phi}^2 \le  4 \expval{P_A^\perp + P_B^\perp}{\phi}. \]
If we now choose $\ket \phi = P_{A\cup B}^\perp \ket\phi$, then a
simple calculation shows that
\begin{multline*}
  \norm{P_AP_B\ket\phi}^2 = \expval{P_BP_AP_B}{\phi} =
  \expval{P_{A\cup B}^\perp P_B P_A P_B P_{A \cup B}^\perp}{\phi} = \\
  \expval{P_B (P_A - P_{A\cup B})(P_B - P_{A \cup B}) P_B}{\phi}
  \le \norm{(P_A-P_{A\cup B})(P_B-P_{A \cup B})} \norm{\phi}^2.
\end{multline*}
We thus obtain the following bound
\begin{equation}
  \label{eq:quasi-fact-detect}
  c^\prime \expval{P_{A\cup B}^\perp}{\phi} \le \expval{P_A^\perp}{\phi} + \expval{P_B^\perp}{\phi},
\end{equation}
but now $c^\prime =  \frac{1}{4}(1-\delta(A,B))$. While very similar
to \cref{eq:quasi-factorization}, the constant $c^\prime$ does not tend to 1 when $\delta(A,B)$ goes
to zero: as we will see next, this is a crucial property and it is for this reason that \cref{eq:quasi-fact-detect}
will not be useful for our proof.
\end{remark}

\begin{remark}
For one dimensional systems, we expect the martingale condition to be implied by exponential decay of correlations, as has been shown in the commuting Gibbs sampler setting \cite{Kastoryano2016}. However, at this point we only know how to obtain this result if for any state $\ket{\psi}$, there exists a (non- Hermitian) operator $f_{A^c}$ on the complement of $A\subseteq\Lambda$ such that
\begin{equation}
P_A\ket{\psi}=f_{A^c}\ket{\varphi},\label{eqn:1Dcd}
\end{equation} and $\ket{\varphi}$ is the unique ground state of $H_\Lambda$. In that case, the proof is analogous to the one in Ref. \cite{Kastoryano2016}. Eqn. (\ref{eqn:1Dcd}) does not hold in general, however it can be shown to hold for injective PEPS. Hence, for injective MPS correlation decay implies the martingale condition.
\end{remark}

\subsection{Spectral gap via recursion}
As we have mentioned in the introduction, the strategy for proving a lower bound to the spectral gap will be a recursive one:
given $\lat$, we will decompose it into two overlapping subsets, so that $\lat = A \cup B$ and we will be able to use \cref{lemma:quasi-factorization}. This would lead to the following expression
\begin{multline}\label{eq:wrong-recursion-1}
  (1-2\delta(A,B)) \expval{P_{\lat}^\perp}{\phi} \le \expval{P_A^\perp}{\phi}  + \expval{P_B^\perp}{\phi} \le \\
  \le \frac{1}{\min(\lambda_A, \lambda_B)} \expval{H_A + H_B}{\phi} =
   \frac{1}{\min(\lambda_A, \lambda_B)} \expval{H_{\lat} + H_{A\cap B}}{\phi}.
\end{multline}
We now face the problem of what to do with the term $\expval{H_{A\cap B}}{\phi}$.
Because of frustration freedom, we can bound it with $\expval{H_{\lat}}{\phi}$, leading to
\begin{equation}
  \label{eq:wrong-recursion-2}
  \lambda_{A\cup B} \ge \frac{1-2\delta(A,B)}{2} \min(\lambda_A, \lambda_B).
\end{equation}
Then it is clear that, even in the case of $\delta(A,B)=0$, this strategy is going to fail:
at each step of the recursion our bound on $\lambda_\lat$ is cut in half,
so in the limit of $\lat \to \inflat$ we will obtain a vanishing lower bound.
The way out of this obstacle is to observe that if we have $s_k$ different ways of splitting
$\lat$ as $A_i \cup B_i$, and if moreover the intersections $A_i \cap B_i$ are disjoint for different $i$,
then we can average \cref{eq:wrong-recursion-1} and obtain
\begin{multline}
\label{eq:right-recursion-1}
  \expval{P_{\lat}^\perp}{\phi} \le \frac{1}{s_k} \sum_{i=1}^{s_k} \frac{(1-2\delta(A_i,B_i))^{-1}}{\min(\lambda_{A_i}, \lambda_{B_i})} \expval{H_{A_i} + H_{B_i}}{\phi} \le \\
  \le \frac{(1-2\delta_k)^{-1}}{\min\{\lambda_{A_i}, \lambda_{B_i}\}_i} \expval{H_\lat +
    \frac{1}{s_k} \sum_{i=1}^{s_k} H_{A_i \cap B_i}}{\phi} \le \\ \le
  (1-2\delta_k)^{-1} \frac{1+1/s_k}{\min\{\lambda_{A_i}, \lambda_{B_i}\}_i}\expval{H_\lat}{\phi}.
\end{multline}
Then \cref{eq:wrong-recursion-2} becomes
\begin{equation}
\label{eq:right-recursion-2}
\lambda_\lat \ge \frac{1-2\delta_k}{1+1/s_k} \min\{\lambda_{A_i}, \lambda_{B_i}\}_i.
\end{equation}
Now the problem is simply to check whether we can find a right balance between the number $s_k$ of different ways to partition $\lat$
(in order to make the product $(1+1/s_k)$ convergent in the recursion), the size of $A_i$ and $B_i$ (if one of them is similar in
size to $\lat$, then we will not have gained much from the recursion), and the size of their overlaps (in order to make $\delta_k$ small).
The geometrical construction presented in \cref{prop:geometrical-construction} shows that such balance is obtainable, if we choose $1/s_k$ to be summable.

By formalizing this idea, we can finally prove the main theorem of this section.
\begin{theorem}[Spectral gap recursion bound]\label{thm:cond-c-gap}
Fix an increasing sequence of positive integers $(s_k)_k$ such that $\sum_k \frac{1}{s_k}$ is summable.
Let $l_k$ and $\mcl F_k$ be as in \cref{def:fk}, and $\delta_k = \delta^s_k$ as in \cref{eq:cond-c} and
\[ \lambda_k = \inf_{\lat \in \mcl F_k} \lambda_\lat .\]
Let $k_0$ be the smallest $k$ such that $\delta_k < 1/2$ for all $k\ge k_0$.
Then there exists a constant $C>0$, depending on $\inflat$ and on the sequence $(s_k)_k$ but not on $k$, such that
\begin{equation}\label{thm:gap-recursion}
\lambda_{k} \ge \lambda_{k_0} C \prod_{j=k_0+1}^k \qty(1 -2\delta_j).
\end{equation}
In particular, if \cref{cond:c} is satisfied, the Hamiltonian is gapped.
\end{theorem}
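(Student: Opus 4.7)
The plan is to formalize the averaging argument sketched in equations \eqref{eq:right-recursion-1}--\eqref{eq:right-recursion-2} into a clean induction on $k$. First I note that $\mcl F_{k-1} \subset \mcl F_k$, so $\lambda_k \le \lambda_{k-1}$, and for $\Lambda \in \mcl F_{k-1}$ the desired bound holds by the inductive hypothesis. Thus it suffices to fix an arbitrary $\Lambda \in \mcl F_k \setminus \mcl F_{k-1}$ with $k > k_0$ and obtain a lower bound on $\lambda_\Lambda$ in terms of $\lambda_{k-1}$.

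To do this, I would invoke \cref{prop:geometrical-construction} to get an $s_k$-decomposition $(A_i, B_i)_{i=1}^{s_k}$ of $\Lambda$ with $A_i, B_i \in \mcl F_{k-1}$. For each $i$, the quasi-factorization \cref{lemma:quasi-factorization} combined with the uniform bound $\delta(A_i, B_i) \le \delta_k$ yields
\[
(1-2\delta_k) \expval{P_\Lambda^\perp}{\phi} \le \expval{P_{A_i}^\perp}{\phi} + \expval{P_{B_i}^\perp}{\phi} \le \frac{1}{\lambda_{k-1}} \expval{H_{A_i} + H_{B_i}}{\phi},
\]
where the second inequality uses that $A_i, B_i \in \mcl F_{k-1}$. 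Writing $H_{A_i} + H_{B_i} = H_\Lambda + H_{A_i \cap B_i} - (H_\Lambda - H_{A_i \cup B_i})$ and using $A_i \cup B_i = \Lambda$ plus positivity of the omitted terms, this becomes $\le H_\Lambda + H_{A_i \cap B_i}$ as quadratic forms. Averaging over $i$ and using property (3) of the decomposition — the sets $A_i \cap B_i$ are pairwise disjoint, so each local term $h(X)$ appears in at most one $H_{A_i \cap B_i}$ — gives $\sum_i H_{A_i \cap B_i} \le H_\Lambda$, and therefore
\[
(1-2\delta_k) \expval{P_\Lambda^\perp}{\phi} \le \frac{1 + 1/s_k}{\lambda_{k-1}} \expval{H_\Lambda}{\phi}.
\]
Taking the infimum over $\phi$ via \cref{eq:variational-sg} and then over $\Lambda \in \mcl F_k$ yields the one-step recursion $\lambda_k \ge \frac{1-2\delta_k}{1+1/s_k}\lambda_{k-1}$.

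Iterating from $k_0$ to $k$ gives
\[
\lambda_k \ge \lambda_{k_0} \prod_{j=k_0+1}^k \frac{1-2\delta_j}{1+1/s_j} = \lambda_{k_0} \left(\prod_{j=k_0+1}^k \frac{1}{1+1/s_j}\right) \prod_{j=k_0+1}^k (1-2\delta_j),
\]
and the constant $C := \prod_{j>k_0}(1+1/s_j)^{-1}$ is strictly positive because $\sum_j 1/s_j < \infty$ implies the infinite product converges to a positive limit. Finally, under \cref{cond:c} the summability $\sum_j \delta_j < \infty$ (together with $\delta_j < 1/2$ for $j \ge k_0$) gives $\prod_{j>k_0}(1-2\delta_j) > 0$, so $\inf_k \lambda_k > 0$ and $H$ is gapped.

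The only genuinely delicate point is verifying $\sum_i H_{A_i \cap B_i} \le H_\Lambda$: this is where property (3) of \cref{prop:geometrical-construction} is essential, because each interaction $h(X)$ supported inside some $A_i \cap B_i$ must not be double-counted across different $i$, and then the remaining local terms of $H_\Lambda$ contribute non-negatively by our normalization $h(X) \ge 0$. The rest of the argument is bookkeeping: the combinatorial tightness of the averaging trick — replacing the naive factor $1/2$ from \eqref{eq:wrong-recursion-2} by $1/(1+1/s_k)$ — is precisely what makes the recursion converge to a positive gap.
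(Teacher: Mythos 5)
Your proof is correct and follows essentially the same route as the paper's: you fix $\Lambda \in \mcl F_k \setminus \mcl F_{k-1}$, apply \cref{lemma:quasi-factorization} to each pair of an $s_k$-decomposition, average over $i$ using the pairwise disjointness of the $A_i \cap B_i$ to get $\frac{1}{s_k}\sum_i H_{A_i\cap B_i} \le \frac{1}{s_k}H_\Lambda$, obtain the one-step recursion $\lambda_k \ge \frac{1-2\delta_k}{1+1/s_k}\lambda_{k-1}$, and control the two infinite products via $\sum 1/s_j < \infty$ and $\sum \delta_j < \infty$. The one blemish is the displayed identity $H_{A_i}+H_{B_i} = H_\Lambda + H_{A_i\cap B_i} - (H_\Lambda - H_{A_i\cup B_i})$: since $A_i \cup B_i = \Lambda$ the parenthesis vanishes, and the resulting equality $H_{A_i}+H_{B_i}=H_\Lambda+H_{A_i\cap B_i}$ is actually only an inequality (the terms $h(X)$ with $X\subset\Lambda$ but $X\not\subset A_i$ and $X\not\subset B_i$ are absent from the left side), so the phrase ``positivity of the omitted terms'' contradicts the equals sign as written --- but since you only use the $\le$ direction, which is correct, the argument goes through unchanged.
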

\begin{proof}
Fix a $\lat \in \mcl F_k\setminus \mcl F_{k-1}$ and let $(A_i, B_i)_{i=1}^{s_k}$ be an $s_k$-decomposition of $\lat$ as in  \cref{prop:geometrical-construction}.
We can then apply \cref{lemma:quasi-factorization} to each pair $(A_i,B_i)$,
average over the resulting bounds, and obtain as in \cref{eq:right-recursion-2}
\[ \lambda_\lat \ge \frac{1-2\delta_k}{1+1/s_k} \min_i \{\lambda_{A_i}, \lambda_{B_i}\} \ge
  \frac{1-2\delta_k}{1+1/s_k} \lambda_{k-1}.  \]
Since $\lat$ was arbitrary, we have obtained that
\begin{equation}\label{eq:proof-recursion-gap}
\lambda_k \ge  \frac{1-2\delta_k}{1+1/s_k} \lambda_{k-1}.
\end{equation}
By iterating \cref{eq:proof-recursion-gap} $k-k_0$ times, we obtain
\[ \lambda_k \ge  \prod_{j=k_0+1}^k \frac{1-2\delta_j}{1+1/s_j} \lambda_{k_0} .\]
We want to show now that this gives rise to the claimed expression. Notice that if we denote $C^{-1} := \prod_{j=1}^\infty (1+1/s_k)$ then
\[  1 \le C^{-1} \le \prod_{j=1}^\infty \qty[ 1+\frac{1}{s_k}] < \infty .\]
This can be seen by observing that the series  $\log \prod_{j=1}^\infty \qty( 1+\frac{1}{s_k}) =  \sum_{j=1}^\infty \log(1+\frac{1}{s_k})$
is summable, since by comparison it has the same behavior as $\sum_j\frac{1}{s_k}$, which is summable by assumption. This implies
in particular that $\prod_{j=1}^k (1+1/s_k)^{-1} \ge C > 0$ for all $k$.
Finally, in order to prove that the Hamiltonian is gapped, we only need to show that \cref{cond:c} implies
\begin{equation}
 K := \prod_{j=k_0+1}^\infty \qty(1 -2\delta_j)  > 0.
\end{equation}
This again is equivalent to the fact that $(\delta_j)_{j=k_0+1}^\infty$ is a summable sequence, which is imposed by \cref{cond:c}.
\end{proof}

\section{Local gap threshold}
Equivalence between \cref{cond:c} and \cref{cond:a} can be seen as a ``self-improving''
condition on $\delta_k$, where assuming that it decays faster than some threshold rate implies
that it is actually decaying exponentially. This type of argument is reminiscent of ``spectral gap amplification'' as described in Ref.\cite{Anshu2016}. The same type of self-improving statement can be
obtained for the spectral gap of $H$.

\begin{lemma}
  \label{cor:knabe-like}
  Fix an increasing sequence of integers $s_k$ such that $\sum \frac{1}{s_k} <\infty$ and $ \sum_{k}\frac{s_k}{l_k}<\infty$.
  Let $H$ be a local Hamiltonian, and let (as in \cref{thm:cond-c-gap}) $\lambda_k = \inf_{\Lambda
    \in \mcl F_k} \lambda_\Lambda$, where $\lambda_\Lambda$ is the spectral gap of $H_\Lambda$.
  If there exist a $C>0$ and a $k_0$ such that
  \begin{equation}\label{eq:cor-knabe-like}
    \lambda_k > C\, k \frac{s_k}{l_k}, \quad \forall k\ge k_0,
  \end{equation}
  then system is gapped (and $\inf_k \lambda_k>0$).
\end{lemma}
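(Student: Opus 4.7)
The plan is to close the loop exactly as suggested in the introduction: use the hypothesis to get a good lower bound on the local gaps $\lambda_\Lambda$ for $\Lambda\in\mcl F_k$, feed this into \cref{thm:gap-m2} to get a geometrically decaying bound on $\delta_k$, and then invoke \cref{thm:cond-c-gap} to conclude a gap uniformly in $k$.

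More concretely, fix $\Lambda\in\mcl F_k\setminus\mcl F_{k-1}$ with $k\ge k_0$, and let $(A_i,B_i)_{i=1}^{s_k}$ be any $s_k$-decomposition of $\Lambda$ as supplied by \cref{prop:geometrical-construction}. By property~(2) of that proposition, $\dist(\Lambda\setminus A_i,\Lambda\setminus B_i)\ge \frac{l_k}{8s_k}-2$, and by hypothesis $\lambda_\Lambda\ge\lambda_k > C\,k\,\frac{s_k}{l_k}$. Applying \cref{thm:gap-m2} with this value of $l$ yields
\[
\delta(A_i,B_i)\;\le\;\bigl(1+\lambda_\Lambda/g^2\bigr)^{-\frac{1}{2}\bigl(\frac{l_k}{8s_k}-2\bigr)}\;\le\;\Bigl(1+\tfrac{C\,k\,s_k}{g^2\,l_k}\Bigr)^{-\frac{1}{2}\bigl(\frac{l_k}{8s_k}-2\bigr)}.
\]
Since $\sum_k s_k/l_k<\infty$ by assumption, we have $s_k/l_k\to 0$, so for all $k$ sufficiently large $x_k:=C\,k\,s_k/(g^2 l_k)\le 1$ and hence $\log(1+x_k)\ge x_k/2$. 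Taking the supremum over $\Lambda$ and over $s_k$-decompositions gives
\[
\log\delta_k\;\le\;-\tfrac{1}{4}\Bigl(\tfrac{l_k}{8s_k}-2\Bigr)\cdot\tfrac{C\,k\,s_k}{g^2 l_k}\;=\;-\tfrac{C\,k}{32\,g^2}+O\bigl(\tfrac{k s_k}{l_k}\bigr),
\]
so $\delta_k\le\exp(-c\,k)$ for some $c>0$ and all $k$ large enough. In particular $\sum_k\delta_k<\infty$ and $\delta_k<1/2$ eventually, so \cref{cond:c} is satisfied with the given sequence $s_k$.

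Having verified \cref{cond:c}, we directly apply \cref{thm:cond-c-gap}: the product $\prod_{j\ge k_0+1}(1-2\delta_j)$ converges to a strictly positive number because $(\delta_j)$ is summable, and $\prod_j(1+1/s_j)^{-1}$ is positive by the same argument used in the proof of that theorem, so $\lambda_k\ge \lambda_{k_0}\,C'\prod_j(1-2\delta_j)>0$ uniformly in $k$. This gives $\inf_k\lambda_k>0$, i.e. the system is gapped.

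The only place that requires care is the passage from the (possibly tiny) lower bound $\lambda_\Lambda\ge Cks_k/l_k$ to an exponentially small $\delta_k$: one needs the product of the exponent $\frac{l_k}{8s_k}-2$ and the linearized logarithm $\log(1+x_k)\sim x_k$ to scale like $k$, and this works precisely because the $l_k/s_k$ and $s_k/l_k$ factors cancel, leaving the growing factor $k$ from the hypothesis. The summability condition $\sum_k s_k/l_k<\infty$ is used only to guarantee $x_k$ is eventually small so the linearization is valid; the main work is done by the factor of $k$ in \cref{eq:cor-knabe-like}, which is what turns a polynomial local-gap threshold into a constant bulk gap in the spirit of Knabe's theorem.
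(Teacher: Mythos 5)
Your overall strategy is exactly the paper's: lower-bound the overlap size $l_k/(8s_k)$ via \cref{prop:geometrical-construction}, feed the hypothesis $\lambda_\Lambda\ge\lambda_k>Cks_k/l_k$ into \cref{thm:gap-m2} to get an explicit decay of $\delta_k$, verify that $\sum_k\delta_k<\infty$ (the paper does this via the root test, you do it by a direct estimate), and then invoke \cref{thm:cond-c-gap}. So the route is the same; the only substantive difference is one technical step, and that step contains a genuine gap.

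The problematic assertion is: ``since $\sum_k s_k/l_k<\infty$ we have $s_k/l_k\to 0$, so for all $k$ sufficiently large $x_k:=Cks_k/(g^2l_k)\le 1$.'' Summability of $s_k/l_k$ gives $s_k/l_k\to 0$, but it does \emph{not} give $k\,s_k/l_k\to 0$ or even boundedness of $k\,s_k/l_k$: a positive summable sequence $(a_k)$ need not satisfy $k a_k\to 0$ unless it is eventually monotone, which is not assumed (and $s_k$ increasing does not make $s_k/l_k$ decreasing). Consequently, on a sparse subsequence one could have $x_k>1$, and there your linearization $\log(1+x_k)\ge x_k/2$ is not available, so the claimed bound $\delta_k\le e^{-ck}$ is not established for those $k$. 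This same subtlety is present, slightly disguised, in the paper's own phrasing (``if $\liminf\lambda_k=0$ then $\liminf(1+\lambda_k/g^2)^{1/\lambda_k}=e^{1/g^2}$''), which implicitly needs $\lambda_k$ to be bounded.

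The gap is easy to close. Split according to whether $x_k\le 1$ or $x_k>1$. When $x_k\le 1$ your argument gives $\delta_k\le e^{1/2}e^{-Ck/(32g^2)}$, summable. When $x_k>1$ use $\log(1+x_k)\ge\log 2$ to get $\delta_k\le 2\cdot 2^{-l_k/(16s_k)}$; this subfamily is summable because, writing $b_k=s_k/l_k$, one has $e^{-c/b_k}\le (e^{-1}/c)\,b_k$ whenever $b_k\le c$ (the function $b\mapsto e^{-c/b}/b$ is increasing on $(0,c]$), so $\sum_k e^{-c/b_k}\lesssim\sum_k b_k<\infty$ plus the finitely many $k$ with $b_k>c$. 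Either way $\sum_k\delta_k<\infty$, and the rest of your proof (invoking \cref{thm:cond-c-gap}) goes through unchanged. So the conclusion and the approach are right, but as written the intermediate bound $\delta_k\le e^{-ck}$ is not justified on all of $k\ge k_0$.
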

\begin{proof}
  Since for every $s_k$-decomposition $A_i, B_i$ of $\Lambda \in \mcl F_k$ the overlap $A_i \cap
  B_i$ has size at least $\frac{l_k}{8 s_k}$, by
  \cref{thm:gap-m2}, we have that
  \[ \delta_k \le \qty(1+\frac{\lambda_k}{g^2})^{-\frac{l_k}{16s_k}} .\]
  We now need to check that $\delta_k$ is summable. By the root test, it is sufficient that
  \[ \limsup \qty(1+\frac{\lambda_k}{g^2})^{-\frac{l_k}{16 k s_k}} = \qty( \liminf \qty(1+\frac{\lambda_k}{g^2})^{\frac{l_k}{26k s_k}} )^{-1} < 1 , \]
  i.e. that
  \[ \liminf \qty(1+\frac{\lambda_k}{g^2})^{\frac{l_k}{16 k s_k}} = \exp(  \liminf \frac{\lambda_k l_k}{16k s_k} \log(1+\frac{\lambda_k}{g^2})^{\frac{1}{\lambda_k}} ) > 1. \]
  If $\liminf \lambda_k = 0$, then $\liminf (1+\frac{\lambda_k}{g^2})^{\frac{1}{\lambda_k}} =
  e^{\frac{1}{g^2}} > 1$ (and if $\liminf \lambda_k >0 $ there is nothing left to prove, since then we already know that the system is gapped ), and therefore we can reduce to check that
  \[ \liminf \frac{\lambda_k}{k} \frac{l_k}{s_k} > 0, \]
  which is implied by \cref{eq:cor-knabe-like}.
\end{proof}

If we now read the condition of \cref{eq:cor-knabe-like} in terms of the length of the sides of the sets in $\mcl F_k$, we obtain a proof of \cref{cor:threshold}.
\begin{proof}[Proof of \Cref{cor:threshold}.]
  Let $\Lambda \in \mcl F_k$: then its diameter will be at most a
  constant times $l_{k}$. If we denote it by $n$, then
  $k \ge q \log(n)$ for some $q>0$. If we choose
$s_k = k^{1+\epsilon}$ for some $\epsilon >0$, we see that
\cref{eq:cor-knabe-like} is satisfied if we can find $\epsilon$ and
$C>0$ such that
\begin{equation}\label{eq:knabe-bound}
  \lambda_\Lambda > C \frac{\log(n)^{2+\epsilon}}{n}
\end{equation}
holds for all rectangles $\Lambda$ with sides bounded by
$n$. If the Hamiltonian is gapless, then necessarily
$\lambda_\Lambda = o\qty(\frac{\log(n)^{2+\epsilon}}{n})$ for every
$\epsilon > 0$.
\end{proof}

This result has to be compared with similar results obtained in Refs. \cite{Knabe1988,Gosset2016} in the specific
case of nearest-neighbor interactions in 1D chains and in 2D square and hexagonal lattices. In all
these cases, the authors obtain a local gap threshold which implies a spectral gap in the limit in the
following sense: denoting $\lambda_n$ the spectral gap of a finite system defined on a subset of
``side-length'' $n$ (where the exact definition depends on the dimension and the geometry of the
lattice, but the general idea is that such a subset has $\order{n^D}$ sites), there exists a sequence
$\gamma_n$ (the local gap threshold) such that, if for some $n_0$ it holds that
$\lambda_{n_0} > \gamma_{n_0}$, then the system is gapped in the limit. The converse is that, if the
Hamiltonian is gapless, then $\lambda_n = \order{\gamma_n}$. The values of $\gamma_n$ present in
Refs. \cite{Knabe1988,Gosset2016} are recalled in \cref{table:knabe}.

The obvious downside of \cref{cor:knabe-like} over the results in Refs. \cite{Knabe1988,Gosset2016}
is that these only require a single $n_0$ satisfying $\lambda_{n_0} > \gamma_{n_0}$, while
\cref{eq:cor-knabe-like} is a condition to be satisfied for each $n$. On the other hand, it can be
applied in more general settings than nearest neighbor interactions, as well as in dimensions higher
than 2, and can be easily generalized to regions with different shapes. The upper bound on
$\lambda_n$ for a gapless Hamiltonian which we derive is worse by a polynomial factor than the ones
obtained in 1D and in the 2D square lattice, and it is only off by a logarithmic factor than the 2D
hexagonal lattice case.  While the logarithmic factor in our bound is probably just an artifact of
the proof, it is an interesting open question whether the optimal scaling for the general case is $\order{1/n^2}$.

\begin{table}[h]
  \begin{center}
  \begin{tabular}{*4c}
    \hline
     1D \cite{Knabe1988} & 1D \cite{Gosset2016} & 2D hexagonal \cite{Knabe1988} & 2D square \cite{Gosset2016}   \\ \hline
     $\frac{1}{n-1}$ & $\frac{6}{n(n+1)}$ & $\frac{1}{3n-1}$ & $\frac{8}{n^2}$
  \end{tabular}
  \caption{Local gap thresholds for different spin lattices (values for $\gamma_n$).}
  \label{table:knabe}
\end{center}
\end{table}

One should also mention the Lieb-Schultz-Mattis theorem \cite{Lieb_1961} and its generalization to
higher dimensions \cite{Hastings_2004,Nachtergaele_2007}, which proves that a class of half-integer
spin models (not necessarily frustration free) with translational invariance, continuous symmetry
and unique ground state is gapless. For this class of models the gap is bounded by
$\order{\frac{\log n}{n}}$ (the $\log n$ factor can be removed in 1D), which is slightly better than
the general upper bound we have obtained.

\section*{Acknowledgments}
  We acknowledge financial support from the European Research Council (ERC Grant Agreement no
  337603), the Danish Council for Independent Research (Sapere Aude) and VILLUM FONDEN via the QMATH
  Centre of Excellence (Grant No. 10059) and the Villum Young Investigator Award.
  A.L. would like to thank Amanda Young for the fruitful discussion about the PVBS model.

\printbibliography

\appendix
\section{Geometrical construction}
\label{appendix:geometrical-construction}
\begin{proof}[Proof of \cref{prop:geometrical-construction}]
Let $d_k = \frac{l_k}{8s}$. For $i=1,\dots,s$, we define
\begin{align*}
  A_i &= \qty([0,l_{k+1}]\times \dots \times [0,l_{k+D-1}] \times \qty[0, \frac{l_{k+D}}2 + 2i\, d_k]) \cap \lat ; \\
  B_i &= \qty([0,l_{k+1}]\times \dots \times [0,l_{k+D-1}] \times \qty[\frac{l_{k+D}}2 + (2i-1)d_k, l_{k+D}]) \cap \lat .
\end{align*}
Let us start by proving that $A_i$ and $B_i$ are in $\mcl F_{k-1}$. In order to do so, we need to
show that up to translations and permutations of the coordinates, they are contained in
$R(k-1)$. If we look at coordinate $j=1,\dots, D-1$, then their sides are
contained in $[0,l_{k+j}]$, so it is enough to show that across the $D$-th coordinate they are not
more than $l_k$ long. $A_i$ has a larger side than $B_i$, so we can focus on it only. Then we see
that
\[ \frac12 l_{k+D} + 2i\, d_k \le \frac{1}{2}(3/2)^\frac{k+D}{D} + 2s\, d_k = \frac34 l_k + \frac14
  l_k = l_k. \] So that $A_i$ and $B_i$ belong to $\mcl F_{k-1}$ for every $i$.  If either $A_i$ or
$B_i$ were empty for a given $i$, then $\lat$ would be contained in a set belonging to
$\mcl F_{k-1}$, and thus it would itself belong to $\mcl F_{k-1}$, but we have excluded this by
assumption. So $A_i$ and $B_i$ are not empty.  Clearly $\lat = A_i \cup B_i$, and
$\dist(\lat\setminus A_i, \lat \setminus B_i) \ge d_k -2$. Finally, we see that
\[ A_i \cap B_i = \qty([0,l_{k+1}]\times \dots \times [0,l_{k+D-1}] \times \qty[\frac{l_{k+D}}2 + (2i-1)d_k,  \frac{l_{k+D}}2 + 2i\, d_k]) \cap \lat,\]
so that $A_i \cap B_i \cap A_j \cap B_j = \emptyset$ for all $i\neq j$.
\end{proof}

\end{document}